\titleformat*{\section}{\centering}
\def\singlespace{\def\baselinestretch{1}\@normalsize}
\def\singlespace{\def\baselinestretch{1}\@normalsize}
\numberwithin{equation}{section}
\renewcommand{\hat}{\widehat}
\renewcommand{\hat}{\widehat}
\newcommand{\bfsym}[1]{\ensuremath{\boldsymbol{#1}}}
\def\1{\bfsym{1}}	
\def\newpage{\vfill\eject}
\def\today{\ifcase\month\or
	January\or February\or March\or April\or May\or June\or
	July\or August\or September\or October\or November\or December\fi
	\space\number\day, \number\year}
\newdimen\biblioindent    \biblioindent=30pt
\newcommand{\beq}{\begin{equation}}
\newcommand{\eeq}{\end{equation}}
\newcommand{\beqn}{\begin{eqnarray}}
\newcommand{\eeqn}{\end{eqnarray}}
\newcommand{\beqnn}{\begin{eqnarray*}}
	\newcommand{\eeqnn}{\end{eqnarray*}}
\def\tilde{\widetilde}
\def\[{\left [}  \def\]{\right ]} \def\({\left (}  \def\){\right )}
\def\hat{\widehat}
\newtheorem{assumption}{Assumption}
\newtheorem{theorem}{Theorem}
\newtheorem{lemma}{Lemma}
\newtheorem{proposition}{Proposition}
\theoremstyle{definition}
\newtheorem{definition}{Definition}
\title{A NEW VOLATILITY MODEL:  GQARCH-IT\^{O} MODEL}
\author{HUILING YUAN,$^a$
	YONG ZHOU,$^{c\&d}$
     LU XU$^e$ 
     YULEI SUN$^b$AND
	XIANGYU CUI$^b$\footnote {Corresponding author: Xiangyu Cui. Address: 777 Guoding Rd., Shanghai,200433, P. R. China. Tel: 86-21-6590 4311.   Fax: 86-21-6590 1079. E-mail: cui.xiangyu@mail.shufe.edu.cn.} \\
	\\$^a$ \small{School of Data Science,
	City University of Hongkong, Hongkong, China}\\
   $^b$ \small{School of Statistics and Management, 
	Shanghai University of Finance and Economics, Shanghai, China}\\
	$^c$ \small{Institute of Statistics and Interdisciplinary Sciences and School of Statistics,}\\
	\small{Faculty of Economics and Management, 
	East China Normal University, Shanghai, China} \\
	$^d$ \small{Academy of Mathematics and Systems Sciences, 
	Chinese Academy of Sciences, Beijing, China} \\
	$^e$ \small{Shenyin \& Wanguo Fortune Investment Co., Ltd, Shanghai, China}\\
}
\date{}
\begin{document}
	\maketitle
	
		
	Volatility asymmetry is a hot topic in high-frequency financial market. In this paper, we propose a new econometric model, which could describe volatility asymmetry based on high-frequency historical data and low-frequency historical data. After providing the quasi-maximum likelihood estimators for the parameters, we establish their asymptotic properties. We also conduct a series of simulation studies to check the finite sample performance and volatility forecasting performance of the proposed methodologies. And an empirical application is demonstrated that the new model has stronger volatility prediction power than GARCH-It\^{o} model in the literature.\\
		

	\noindent \textbf{Keywords:}  Volatility asymmetry; Low-frequency historical data; High-frequency historical data;  Quasi-maximum likelihood estimators; Volatility prediction power.\\

	\noindent \textbf{MOS subject classification:} 62M10, 62M20, 62F12
	
	
	\renewcommand{\baselinestretch}{1.5}
	\baselineskip=22pt
	
	\section{INTRODUCTION} 
	\label{SEC-1}
Volatility measures plays a crucial role in modern financial markets. The wide information source of modeling the volatility is the historical data of the security, which can be further
divided into low-frequency data and high-frequency data.
High-frequency data and Low-frequency data are two different time scale in financial market. High-frequency data are observed at intra-day for financial assets, while low-frequency data are referred for financial assets at daily or longer time horizons. There are many important models such as Generalized autoregressive conditional heteroskedasticity
(GARCH) models  \citep{Bollers:1986} used in volatility analysis of low-frequency data. The standard GARCH models describe the squared daily log returns as the conditional volatilities. On the other hand, there are also several well-performing realized volatility estimators, for example, two-time scale realized volatility(TSRV) \citep{Zh:2005}, multi-scale realized volatility(MSRV) \citep{Zh:2006}, kernel realized volatility(KRV) \citep{B:2009}, pre-averaging realized volatility(PRV) \citep{J:2009} and quasi-maximum likelihood estimator(QMLE) \citep{X:2010}. However, these models were developed for high-frequency data and low-frequency data quite independently. In fact, high-frequency data and low-frequency data must be inter-related at different time scales, due to just these different time scales. Fortunately, there are some attempts to bridge the gap between high-frequency data and low-frequency data. \cite{Wang:2002} studies the statistical relationship between the GARCH and diffusion model. \cite{Hansen:2012} studies volatilities analysis by combining the realized GARCH model and the high-frequency volatility model.  \cite{KW:2016} proposed GARCH-It$\hat{o}$ model for merged low-frequency data and high-frequency data.

GARCH-It$\hat{o}$ model is a unified model for both high-frequency data and low-frequency data, which is a continuous-time It$\hat{o}$ process at high-frequency data points, and a GARCH model at integer time points. The parameters estimators based on GARCH-It$\hat{o}$ model have better performances than the estimators using only low-frequency data by asymptotic theory and simulation study.

However, GARCH-It$\hat{o}$ model has not introduced the volatility asymmetry. Volatility asymmetry is an important phenomenon in financial market. There are two explanation for volatility asymmetry: leverage effect and volatility feedback effect. \cite{Black:1976} and \cite{Christie:1982} first gave the description of leverage effect: when the asset prices is declining, the companies leverage (dept-to-equity ratio)become larger, so the stock becomes riskier since its volatility is increasing. Therefore, leverage effect implies a negative correlation structure between the analysed asset return and its volatility changes. On the other hand, \cite{French et al:(1987)} proposed the volatility feedback effect: if volatility is priced, and anticipated increase in volatility would raise the required rate of return, in turn necessitating an immediate stock-price decline to allow for higher future returns.

In recent years, many scholars have studied volatility asymmetry based high-frequency financial data. \cite{Bouchaud et al.:2001} proposed the peak effect at the instantaneous correlation between return and volatility over fairly small time intervals. By an application of high-frequency five-minute S\&P 500 futures, \cite{BLT:2006} found that there exists significantly negative correlation for several days between the absolute high-frequency returns and the current and past returns, and low correlations between the volatility and lagged return. \cite{BO:2009} obtained a highly accurate discrete-time daily stochastic volatility model that distinguishes between the jump and continuous-time components of price movements. \cite{WM:2014} proposed the new nonparametric estimators of leverage effect based on the stochastic volatility model. \cite{KalninaXiu:2017} provided the integrated leverage effect estimator, and gave its the statistical properties. \cite{Curato:2019} presented the non-parameter estimator of leverage effect via Fourier transformation. \cite{Bibinger:2019} explored the non-continuous leverage effect in 320 NASDAQ corporations.

As we all known, GARCH-It$\hat{o}$ model have better statistical performance than other current volatility models. It is natural to extend  GARCH-It$\hat{o}$ model to a unified model describing the volatility asymmetry. Many empirical analysis told us that the complex models would sometimes perform worse properties, and time-consuming, therefore, we try to look for a volatility model with minor modifications for GARCH-It$\hat{o}$ model. Fortunately, \cite{sentana:1995} proposed Quadratic ARCH(QARCH) and Generalized QARCH(GQARCH) model, which can be integrated in economic models and provides a very simple way of calibrating and testing for dynamic asymmetries for some financial times. Inspired by this, we expand GARCH-It$\hat{o}$ model so that features of financial data at both frequencies can be better captured as follows. First, volatility asymmetry that are described in empirical studies are allowed. Second, we explore that volatility forecasting performances for $\frac{1}{j}, j=1,2,3,4,5,6$ of daily volatility. We name the proposed model as the GQARH-It$\hat{o}$ model. The key feature of the proposed model is that its conditional volatility has integrated volatility and asymmetry as innovations.

The paper is organized as follows. Section 2 introduces the GQARCH-It$\hat{o}$ model. Section 3 introduces quasi-likelihood estimation methods and investigates their asymptotic behaviors. Section 4 conducts the simulation studies to check the finite sample performance and different volatility forecasting performances for the proposed model. Section 5 carries out an empirical analysis with CRPS total market index  to demonstrate the advantage of the proposed model in volatility forecasting. We collect the proofs in the Appendix.

\section{GQARCH-IT\^{O} MODEL}

\subsection{GQARCH(1,1) model at discrete-time}

In order to capture dynamic asymmetric that GARCH model rules out, \cite{sentana:1995} proposed GQARCH model, which allows an asymmetric effect on the conditional variance, and GQARCH(1,1) model structure is as follows, 
\begin{align*}
X_{t}-X_{t-1}=&\mu+\xi_{t},\\
\xi_{t} = & \sigma_{t} \varepsilon_{t},\\
\sigma_{t}^{2} = & \kappa+\sigma_{t-1}^{2}+\varphi \xi_{t-1}^2+\phi \xi_{t-1} ,
\end{align*}
where $X_{t}$ is the ture log price at integrated time $t=1,2,\cdots, n$,  $\sigma_{t}$ is volatility,  $\varepsilon_{t}$ is i.i.d  varibles and random errors $\xi_{t}$ satisfy  $E\left[\xi_{t} \bigg|\mathcal{F}_{t-1}^{LF}\right]=0$ a.s.,  $\mathcal{F}_{t}^{LF}=\sigma(X_{t},X_{t-1},\cdots)$. Therefore,  their conditional variances obey,
\begin{equation}\label{eq_cv}
E\left[\xi_{t} ^2 \bigg|\mathcal{F}_{t-1}^{LF} \right]=\kappa +E\left[\xi_{t-1} ^2 \bigg|\mathcal{F}_{t-2}^{LF}\right]+\varphi \xi_{t-1} ^2+\phi \xi_{t-1}.
\end{equation}
Equation (\ref{eq_cv}) is different from the conditional variances of GARCH model, this is because GQARCH model incorporate the term $\xi_{t-1}$.

\subsection{GQARCH-It\^{o} model}
In this subsection, we try to provide a volatility model that describes the  volatility asymmetry by embedding a standard GQARCH(1, 1) model into an It\^{o} process with an instantaneous volatility as follows. Noted that $\mathbb{R}_{+}= [0, \infty]$ and $\mathbb{N}$ is the set of all non-negative integers.
\begin{definition}\label{def_1}
	We call a log security price $X_t$, $t\in[0,+\infty)$, to follow a unified GQARCH-It\^{o} model, if it satisfies
	\begin{equation}\label{eq:garch-ito-iv}
	\begin{array}{rl}
	dX_{t}=&\mu dt+\sigma_{t}dB_{t},\\
	\sigma_{t}^{2}=&\sigma_{[t]}^{2}+(t-[t])\left\{\omega+(\gamma-1)\sigma_{[t]}^{2}\right\}+\beta\left(\displaystyle\int_{[t]}^{t}\sigma_{s}d B_{s}\right)^{2}+\alpha\displaystyle\int_{[t]}^{t}\sigma_{s}d B_{s},
	\end{array}
	\end{equation}
	where $\mu$ is a drift, $B_{t}$ is a standard Brownian motion with respect to a filtration $\mathcal{F}_{t}$, $\sigma_t^2$ is the instantaneous volatility process adapted to $\mathcal{F}_t$, $[t]$ denotes the integer part of $t$, $\theta=\left(\omega,\gamma,\beta,\alpha\right)$.
\end{definition}

According to Definition \ref{def_1}, GQARCH-It\^{o} model is a continuous It\^{o} process defined at all times  $t\in \mathbb{R}_+$. When it is restricted to integer times $t\in \mathbb{N}$,  the conditional variance of daily return $X_{t}-X_{t-1}$  follows a GQARCH(1,1) structure,
\begin{align*}
E\left[Z_{t} ^2 \bigg|\mathcal{F}_{t-1}^{LF} \right]=&\omega_{1}^{g} +\gamma E\left[Z_{t-1} ^2 \bigg|\mathcal{F}_{t-2}^{LF}\right]+\beta_{1}^{g} Z_{t-1} ^2+\alpha_{1}^{g}Z_{t-1},\\
Z_{t}=&\displaystyle\int_{t-1}^{t}\sigma_{s}d B_{s}, t\in\mathbb{N},
\end{align*}
where $\mathcal{F}_{t}^{LF}=\sigma(X_{t},X_{t-1},\cdots)$, $	\omega_{1}^{g}=\beta^{-1}\left(e^{\beta}-1\right)\omega, \beta_{1}^{g}=\beta^{-1}(\gamma-1)\left(e^{\beta}-1-\beta\right)+e^{\beta}-1,
$ and $\alpha_{1}^{g}=\alpha\left(\beta^{-2}(\gamma-1)\left(e^{\beta}-1-\beta\right)+\beta^{-1}\left(e^{\beta}-1\right)\right)$ (see: Proposition \ref{prop:1}).
 Therefore, our proposed GQARCH-It\^{o} model could capture the asymmetries based on low-frequency historical data and high-frequency historical data. The asymmetry explanation is the same  as GQARCH model in \cite{sentana:1995}' paper.

\subsection{Integrated volatility for GQARCH-It\^{o} model}
In general, the trade period is about six and half hours in financial market, the financial sectors often focus on the several hours volatility prediction rather than daily volatility. Therefore, we would study the $\frac{1}{j}, j=1,2,3,4,5,6$ of daily volatility obtained from the GQARCH-It\^{o} model over consecutive integers.

\begin{proposition}\label{prop:1}
	(a) Under GQARCH-It\^{o} model,  we have, for any $k,n \in \mathbb{N}$ and $0< \beta <1$,
	\begin{align*}
	R(k)\equiv & \int_{n-\frac{1}{j}}^{n}\frac{(n-t)^{k}}{k!}\sigma_{t}^{2}dt\\
	= & \frac{ \omega +(\gamma +j(k+2)-1) \sigma_{n-\frac{1}{j}}^{2}}{(k+2)!}\left(\frac{1}{j}\right)^{k+2}+\beta R(k+1)+2 \beta
	\int_{n-\frac{1}{j}}^{n}\frac{(n-t)^{k+1}}{(k+1)!}\int_{n-\frac{1}{j}}^{t}\sigma_{s}dB_{s}\sigma_{t}dB_{t}\\
	+&\alpha \int_{n-\frac{1}{j}}^{n}\frac{(n-t)^{k+1}}{(k+1)!}\sigma_{t}dB_{t}.
	\end{align*}
	In particular,
	\begin{align*}
	&\int_{n-\frac{1}{j}}^{n}\sigma_{t}^{2}dt\\
	=&\sum_{k=0}^{\infty} \frac{\beta^{k} \left[\omega +(\gamma +j(k+2)-1) \sigma_{n-\frac{1}{j}}^{2}\right]}{(k+2)!}\left(\frac{1}{j}\right)^{k+2}+ \sum_{k=0}^{\infty}2
	\int_{n-\frac{1}{j}}^{n}\frac{\left[\beta(n-t)\right]^{k+1}}{(k+1)!}\int_{n-\frac{1}{j}}^{t}\sigma_{s}dB_{s}\sigma_{t}dB_{t}\\
	+&\sum_{k=0}^{\infty}\int_{n-\frac{1}{j}}^{n}\frac{\alpha\left(n-t\right)^{k+1}}{(k+1)!}\beta^{k}\sigma_{t}dB_{t}\\
	=& g_{n}(\theta)+D_{n},
	\end{align*}
	where
	\begin{align}\label{eq:h_nj}
	g_{n}(\theta)=&\omega_{j}^{g}+\frac{\gamma+j-1}{j} g_{n-\frac{1}{j}}(\theta)+\beta_{j}^{g}Z_{n-\frac{1}{j}}^{2}+\alpha_{j}^{g}Z_{n-\frac{1}{j}},
	\end{align}
	the parameters are given as
	\begin{align*}
	\omega_{j}^{g}=&\beta^{-1}\left(e^{\frac{\beta}{j}}-1\right)\frac{\omega}{j},\quad \beta_{j}^{g}=\beta^{-1}(\gamma-1)\left(e^{\frac{\beta}{j}}-1-\frac{\beta}{j}\right)+e^{\frac{\beta}{j}}-1,\\
	\alpha_{j}^{g}=&\alpha\left(\beta^{-2}(\gamma-1)\left(e^{\frac{\beta}{j}}-1-\frac{\beta}{j}\right)+\beta^{-1}\left(e^{\frac{\beta}{j}}-1\right)\right),
	\end{align*}
	and
	\begin{align*}
	D_{n}=2\int_{n-\frac{1}{j}}^{n}\left(e^{(n-t)\beta}-1\right)\int_{n-\frac{1}{j}}^{t}\sigma_{s}dB_{s}\sigma_{t}dB_{t}+\frac{\alpha}{\beta}\int_{n-\frac{1}{j}}^{n}(e^{\left(n-t\right)\beta}-1)\sigma_{t}dB_{t},
	\end{align*}
	is a martingale difference.
	
	(b) For any $k,n \in \mathbb{N}$ and $0< \beta <1$, we have
	\begin{align*}
	&E\left[\int_{n-\frac{1}{j}}^{n}\frac{(n-t)^{k}}{k!}\sigma_{t}^{2}dt \bigg|\mathcal{F}_{n-1}\right]\\
	=&\frac{ \omega +(\gamma +j(k+2)-1) \sigma_{n-\frac{1}{j}}^{2}}{(k+2)!}\left(\frac{1}{j}\right)^{k+2}+\beta
	E\left[\int_{n-\frac{1}{j}}^{n}\frac{(n-t)^{k+1}}{(k+1)!}\sigma_{t}^{2}dt\bigg|\mathcal{F}_{n-1}\right].
	\end{align*}
	In particular,
	\begin{align*}
	E\left[\int_{n-\frac{1}{j}}^{n}\sigma_{t}^{2}dt\bigg|\mathcal{F}_{n-1}\right]=g_{n}(\theta).
	\end{align*}
\end{proposition}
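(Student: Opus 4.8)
The plan is to derive the recursion for $R(k)$ from It\^o's formula together with a deterministic integration by parts, then to read off the ``in particular'' identities by iterating the recursion in $k$ and summing the resulting exponential series, and finally to obtain part~(b) by conditioning, which annihilates the stochastic-integral terms. For the recursion itself, note that on all of $[n-\frac{1}{j},n)$ the reset point in Definition~\ref{def_1} is fixed, so there $\sigma_t^2=\sigma_{n-\frac{1}{j}}^2+(t-n+\frac{1}{j})\{\omega+(\gamma-1)\sigma_{n-\frac{1}{j}}^2\}+\beta(\int_{n-\frac{1}{j}}^{t}\sigma_s\,dB_s)^2+\alpha\int_{n-\frac{1}{j}}^{t}\sigma_s\,dB_s$. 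Applying It\^o's formula to this — the only nonlinear term being $(\int_{n-\frac{1}{j}}^{t}\sigma_s\,dB_s)^2$, with differential $2(\int_{n-\frac{1}{j}}^{t}\sigma_s\,dB_s)\,\sigma_t\,dB_t+\sigma_t^2\,dt$ — gives $d\sigma_t^2=\{\omega+(\gamma-1)\sigma_{n-\frac{1}{j}}^2+\beta\sigma_t^2\}\,dt+(2\beta\int_{n-\frac{1}{j}}^{t}\sigma_s\,dB_s+\alpha)\,\sigma_t\,dB_t$. Since $\frac{d}{dt}[\frac{(n-t)^{k+1}}{(k+1)!}]=-\frac{(n-t)^{k}}{k!}$, I would then write $R(k)=\frac{(1/j)^{k+1}}{(k+1)!}\sigma_{n-\frac{1}{j}}^2+\int_{n-\frac{1}{j}}^{n}\frac{(n-t)^{k+1}}{(k+1)!}\,d\sigma_t^2$ by integration by parts (the boundary term at $n$ vanishes) and substitute the SDE: the constant part of the drift contributes $\frac{\omega+(\gamma-1)\sigma_{n-\frac{1}{j}}^2}{(k+2)!}(1/j)^{k+2}$, the $\beta\sigma_t^2\,dt$ part contributes $\beta R(k+1)$, and the martingale part contributes exactly the two It\^o integrals in the statement; using $\frac{(1/j)^{k+1}}{(k+1)!}=\frac{j(k+2)}{(k+2)!}(1/j)^{k+2}$ to merge the boundary term with the constant drift term produces the coefficient $\omega+(\gamma+j(k+2)-1)\sigma_{n-\frac{1}{j}}^2$, which is the asserted recursion.

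For the closed forms, since $0<\beta<1$ and $|R(k)|\le\frac{1}{k!}\int_{n-\frac{1}{j}}^{n}\sigma_t^2\,dt$ one has $\beta^{k}R(k+1)\to0$, so unrolling the recursion writes $\int_{n-\frac{1}{j}}^{n}\sigma_t^2\,dt$ as $\sum_{k\ge0}\beta^{k}$ times the $k$-th deterministic term plus $\sum_{k\ge0}\beta^{k}$ times the $k$-th martingale term. The martingale part sums, via $\sum_{k\ge0}\frac{[\beta(n-t)]^{k+1}}{(k+1)!}=e^{\beta(n-t)}-1$, to exactly $D_n$, and $D_n$ is a martingale difference because it is a finite sum of It\^o integrals over $[n-\frac{1}{j},n]$ with adapted square-integrable integrands, so $E[D_n\mid\mathcal F_{n-\frac{1}{j}}]=0$. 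The deterministic part I would sum using $\sum_{k\ge0}\frac{\beta^{k}(1/j)^{k+2}}{(k+2)!}=\beta^{-2}(e^{\beta/j}-1-\frac{\beta}{j})$ and $\sum_{k\ge0}\frac{\beta^{k}(1/j)^{k+1}}{(k+1)!}=\beta^{-1}(e^{\beta/j}-1)$; collecting the coefficients of $\omega$ and of $\sigma_{n-\frac{1}{j}}^2$ gives the sum in closed form, and substituting the relation that re-expresses $\sigma_{n-\frac{1}{j}}^2$ through $g_{n-\frac{1}{j}}(\theta)$ and $Z_{n-\frac{1}{j}}$ turns it into $g_n(\theta)=\omega_j^g+\frac{\gamma+j-1}{j}g_{n-\frac{1}{j}}(\theta)+\beta_j^g Z_{n-\frac{1}{j}}^2+\alpha_j^g Z_{n-\frac{1}{j}}$, the stated $\omega_j^g,\beta_j^g,\alpha_j^g$ dropping out of the algebra. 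Part~(b) then follows by taking $E[\,\cdot\mid\mathcal F_{n-1}]$ in the recursion of~(a): the two It\^o integrals have zero conditional mean, leaving the stated conditional recursion, and iterating it exactly as above (now with no martingale pieces) yields $E[\int_{n-\frac{1}{j}}^{n}\sigma_t^2\,dt\mid\mathcal F_{n-1}]=g_n(\theta)$.

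The hard part will be the first reduction: because the model resets only at integers, rewriting $\sigma_t^2$ on $[n-\frac{1}{j},n)$ cleanly through $\sigma_{n-\frac{1}{j}}^2$ and $\int_{n-\frac{1}{j}}^{t}\sigma_s\,dB_s$ requires carefully tracking the contributions generated at the surrounding integer and checking that they recombine into exactly those quantities. The remaining points — justifying the interchange of the infinite sums with the integral and with the conditional expectation (via moment bounds on $\sigma_t^2$), and verifying the closed-form identities for $\omega_j^g,\beta_j^g,\alpha_j^g$ — are routine.
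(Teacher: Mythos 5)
Your route is the same one the paper takes: apply It\^{o}'s formula to $\bigl(\int\sigma_s\,dB_s\bigr)^2$, pair it with the weight $\frac{(n-t)^{k+1}}{(k+1)!}$ to get the recursion in $k$, iterate, sum the exponential series to obtain $g_n(\theta)+D_n$, invoke predictability of the integrands for the martingale-difference claim, and condition to get (b). In the routine steps you are in fact more explicit than the paper (the integration by parts, the merging of the boundary term into the coefficient $\omega+(\gamma+j(k+2)-1)\sigma_{n-\frac{1}{j}}^2$, the tail bound $\beta^kR(k+1)\to 0$, the series identities behind $\omega_j^g,\beta_j^g,\alpha_j^g$).

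The genuine problem sits exactly at the step you yourself flag as ``the hard part'' and then assert rather than verify. Under Definition~\ref{def_1} the volatility on $[n-\frac{1}{j},n)$ is anchored at the integer $n-1$, and the identity you start from, $\sigma_t^2=\sigma_{n-\frac{1}{j}}^2+(t-n+\frac{1}{j})\{\omega+(\gamma-1)\sigma_{n-\frac{1}{j}}^2\}+\beta\bigl(\int_{n-\frac{1}{j}}^{t}\sigma_s\,dB_s\bigr)^2+\alpha\int_{n-\frac{1}{j}}^{t}\sigma_s\,dB_s$, is false for $j>1$: writing $A=\int_{n-1}^{n-\frac{1}{j}}\sigma_s\,dB_s$ and $I_t=\int_{n-\frac{1}{j}}^{t}\sigma_s\,dB_s$, the model value of $\sigma_t^2$ exceeds your expression by $2\beta A I_t+(t-n+\frac{1}{j})(\gamma-1)(\sigma_{n-1}^2-\sigma_{n-\frac{1}{j}}^2)$, which does not vanish. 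Consequently the recursion obtained from the actual dynamics has drift coefficient $\sigma_{n-1}^2$ and inner stochastic integral started at $n-1$, not at $n-\frac{1}{j}$; your derivation (and the displayed forms of $g_n$ and $D_n$) only follows from the model when $j=1$, which is the case actually used for $g_i(\theta)$ in the likelihood. The same remark applies to (b), where for $j>1$ the right-hand side involves $\sigma_{n-\frac{1}{j}}^2$, which is not $\mathcal{F}_{n-1}$-measurable, so the zero-conditional-mean argument really operates at $\mathcal{F}_{n-\frac{1}{j}}$. To be fair, the paper's own proof has exactly the same lacuna — its It\^{o}-lemma display uses the lower limit $n-1$ while the stated recursion uses $n-\frac{1}{j}$, with no reconciliation — so your proposal reproduces the published argument, gap included; but as a self-contained proof of the proposition for general $j$ the deferred reduction is not routine bookkeeping, and it is precisely where either the statement needs amending or an additional argument must be supplied.
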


When $j=1$, Proposition 1 denotes the daily volatility $\int_{n-1}^{n}\sigma_{t}^{2}dt$ consist of $g_{n}(\theta)$ and a martingale difference $D_{n}$.
Where
\begin{align}\label{g_n}
g_{n}(\theta)=&\omega_{1}^{g}+\gamma g_{n-1}(\theta)+\beta_{1}^{g}Z_{n-1}^{2}+\alpha_{1}^{g}Z_{n-1},
\end{align}
and
\begin{align}
D_{n}=2\int_{n-1}^{n}\left(e^{(n-t)\beta}-1\right)\int_{n-1}^{t}\sigma_{s}dB_{s}\sigma_{t}dB_{t}+\frac{\alpha}{\beta}\int_{n-1}^{n}(e^{\left(n-t\right)\beta}-1)\sigma_{t}dB_{t},
\end{align}
According to (\ref{g_n}), $g_{n}(\theta)$ includes the term $\alpha_{1}^{g}Z_{n-1}$, which denotes the volatility asymmetry.

\section{PARAMETER ESTIMATION FOR GQARCH-IT\^{O} MODEL}

\subsection{Quasi-maximum likelihood estimation}
The underlying log price process is assumed to obey the GQARCH-It\^{o} model as described in Definition \ref{def_1}. Let $n$ be the total number of low-frequency observations and $m_{i}$ be the total number of high-frequency observations duiring the $i$th low-frequency period, for example, the $i$th day. Furthermore, the low-frequency historical data are observed true log prices at integer times, namely $X_{i}, i= 0, 1, 2, \cdots, n$, and the high-frequency historical data are
observed log prices at time points between integer times, that is, $t_{i,j}$, $j=0,1,\cdots,m_i$, denote the high-frequency time points
during the $i$-th period satisfying $i-1=t_{i,0}<t_{i,1}<\ldots<t_{i,m_{i}}<t_{i,m_{i}}=t_{i+1,0}=i$. As we all known, the true high-frequency log prices are not observable, and the observed high-frequency log prices are contaminated by the market micro-structure noise. In this regard, we assume that observed high-frequency log prices $Y_{t_{i,j}}$ obey the simple additive noise model,
\begin{align}\label{eq:y}
Y_{t_{i,j}}=X_{t_{i,j}}+\epsilon_{t_{i,j}},
\end{align}
where $\epsilon_{t_{i,j}}$ is micro-structure noise independent of the process of $X_{t_{i,j}}$,
and for each $i$, $\epsilon_{t_{i,j}},j=1,\cdots,m_{i}$, are independent and identically distributed (i.i.d.) with mean zero and variance $a^{2}$.

Similar to \cite{KW:2016}, we propose the quasi-likelihood function $\widetilde{L}_{n,m}^{GH}$ for GQARCH-It\^{o} model as follows,
\begin{align*}
\widetilde{L}_{n,m}^{GH}(\theta)=&-\frac{1}{2n}\sum_{i=1}^{n}\left(\log g_{i}(\theta)+\frac{RV_{i}}{g_{i}(\theta)}\right),
\end{align*}
where $g_{i}(\theta)$ has the structure of (\ref{g_n}), and the realized volatility $RV_{i}$ is computed using $m_{i}$ high-frequency historical data during the $i$-th period and is treated as an ``observation''. $RV_{i}$ could be estimated by two-time scale realized volatility (TSRV) (\citealp{Zh:2005}), multi-scale realized volatility (MSRV) (\citealp{Zh:2006}), kernel realized volatility (KRV) (\citealp{B:2009}), pre-averaging realized volatility (PRV) (\citealp{J:2009}) and quasi-maximum likelihood estimator (QMLE) (\citealp{X:2010}) among others.

We maximize the quasi-likelihood function $\widetilde{L}_{n,m}^{GH}(\theta)$ over parameters'  space $\theta$ and denote the maximizer as $\tilde\theta^{GH}$, that is,
\begin{align*}
\tilde\theta^{GH}=\arg\max \limits_{\theta\in\Theta}\widetilde{L}_{n,m}^{GH}(\theta).
\end{align*}
$\tilde\theta^{GH}= (\omega^{GH}, \beta^{GH}, \gamma^{GH}, \alpha^{GH})$ are the quasi-maximum likelihood estimators of $\theta_0 = (\omega_{0}, \beta_{0}, \gamma_0, \alpha_{0})$.

\subsection {Asymptotic theory of estimators}
In this subsection, we establish consistency and asymptotic distribution for the proposed estimators $\tilde\theta^{GH}= (\omega^{GH}, \beta^{GH}, \gamma^{GH}, \alpha^{GH})$.

First, we fix some notations. Given a random variable $X$ and $p \geq 1$, let $||X||_{L_p}= \{E [ |X|^{p}]\}^{1/p}$. For a matrix $A = (A_{i,j})_{i,j=1,\dots,k}$, and a vector $a = (a_1, \dots , a_k)$, define $||A||_{\max} = \max_{i,j} |A_{i,j}|$ and $||a||_{\max} = \max_{i} |a_i |$. Let $C$ be positive generic constants whose values are free of $\theta$, $n$ and $m_i$, and may change from appearance to appearance. Then, we give the following assumptions, under which the asymptotic theory is established.

\begin{assumption}\label{assumption}
	(a) Let
	\begin{align*}
	\Theta=&\{\theta=(\omega_{1},\beta_{1},\gamma,\alpha_{1} )~|~\omega_{l}<\omega_{1}<\omega_{u}, ~\beta_{l}<\beta_{1}<\beta_{u},~ \gamma_{l}<\gamma<\gamma_{u},
	~\alpha_{l}<\alpha_{1}<\alpha_{u},
	\end{align*}
	where $\omega_{l}$, $\omega_{u}$, $\beta_{l}$, $\beta_{u}$, $\gamma_{l}$, $\gamma_{u}$, $\alpha_{l}$, $\alpha_{u}$,are known constants.
	
	(b) $\left\{|D_{i}| ~|~ i \in \mathbb{N}\right\}$ is uniformly integrable.
	
	(c) One of the following conditions is satisfied.
	
	\quad \quad(c1) $\frac{E\left[Z_{i}^{4}|\mathcal{F}_{i-1}\right]}{g_{i}^{2}(\theta_{0})}\leq C$ a.s. for any $i\in \mathbb{N}$.
	
	\quad \quad(c2) There exists a positive constant $\delta$ such that $E\left[\left(\frac{Z_{i}^{2}}{g_{i}(\theta_{0})}\right)^{2+\delta}\right]\leq C$ for $i\in \mathbb{N}$.
	
	(d) $(D_{i},Z_{i}^{2})$ is a stationary ergodic process.
	
	(e) Let $m=\sum_{i=1}^{n}m_{i}/n$. We have $C_{1}m\leq m_{i} \leq C_{2}m$, $\sup\limits_{1\leq j\leq m_{i}}|t_{i,j}-t_{i,j-1}|=O(m^{-1})$ and $n^{2}m^{-1}\rightarrow 0$ as
	$m, n\rightarrow \infty$.
	
	(f) $\sup\limits_{i\in \mathbb{N}}\left\|RV_{i}-\int_{i-1}^{i}\sigma_{t}^{2}dt\right\|_{L_{1+\delta}}\leq C \cdot m^{-1/4}$ for some $\delta >0 $.
	
	(g) For any $i\in \mathbb{N}$, $E\left[RV_{i}|\mathcal{F}_{i-1}\right]\leq C\cdot E[\int_{i-1}^{i}\sigma_{t}^{2}dt|\mathcal{F}_{i-1}]+C$ a.s.
\end{assumption}

Comparing to the Assumption 1 in \cite{KW:2016}, we add one additional Assumption \ref{assumption}(a) on $\alpha_{l}<\alpha<\alpha_{u}$. Among Assumption \ref{assumption}, (a)-(d) are for the low-frequency part of the model, while  (e)-(g) are for the high-frequency part of the model. Similar to the explanation in \cite{KW:2016}, these assumptions are also all reasonable in this paper. The following theorem would give the consistency and convergence rate for $\tilde\theta^{GH}$.

\begin{theorem}
	(a) Under Assumption 1(a), (b), (d), (f)-(g), there is a unique maximizer of $L_{n}^{GH}(\theta)$ and as $m,n \rightarrow\infty$,  $\tilde{\theta}^{GH}\rightarrow
	\theta_{0}$ in probability, where
	$$L_{n}^{GH}(\theta)=-\frac{1}{2n}\sum_{i=1}^{n}\log(g_{i}(\theta))-\frac{1}{2n}\sum_{i=1}^{n}\frac{g_{i}(\theta_{0})}{g_{i}(\theta)}.$$
	
	(b) Under Assumption 1(a)-(d), (f)-(h), we have
	\begin{align*}
	\left\|\tilde{\theta}^{GH}-\theta_{0}\right\|_{\max} = O_{p}\left(m^{-1/4}+n^{-1/2}\right).
	\end{align*}
\end{theorem}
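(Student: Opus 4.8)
\emph{Consistency (part (a)).} I would run the standard three–step $M$‑estimation argument. \emph{Step 1:} show the feasible criterion is uniformly (in $\theta$) close to the infeasible one, i.e. $\sup_{\theta\in\Theta}|\widetilde L_{n,m}^{GH}(\theta)-L_n^{GH}(\theta)|\toP0$. Using Proposition~\ref{prop:1}(a), write $RV_i-g_i(\theta_0)=\big(RV_i-\int_{i-1}^i\sigma_t^2\,dt\big)+D_i$. The first piece contributes $\frac1{2n}\sum_i g_i(\theta)^{-1}\big(RV_i-\int_{i-1}^i\sigma_t^2\,dt\big)$, whose $L_1$‑norm is $O(m^{-1/4})$ by Assumption~\ref{assumption}(f) once we have $\inf_{\theta\in\Theta}g_i(\theta)\ge c>0$; the latter follows by completing the square, $\beta_1^g Z_{i-1}^2+\alpha_1^g Z_{i-1}\ge-(\alpha_1^g)^2/(4\beta_1^g)$, together with the box constraints in Assumption~\ref{assumption}(a) (and Assumption~\ref{assumption}(g) keeps the criterion integrable). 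The second piece $\frac1{2n}\sum_i D_i/g_i(\theta)$ is, for each fixed $\theta$, an average of martingale differences ($g_i(\theta)$ is $\mathcal F_{i-1}$‑measurable, $E[D_i\mid\mathcal F_{i-1}]=0$), hence $o_p(1)$ by the ergodic theorem under Assumption~\ref{assumption}(d), with uniformity in $\theta$ coming from $|D_i/g_i(\theta)|\le C|D_i|$ and the uniform integrability in Assumption~\ref{assumption}(b). \emph{Step 2:} a uniform law of large numbers, by Assumption~\ref{assumption}(d) plus equicontinuity in $\theta$ (from uniform moment bounds on $g_i(\theta)$, $1/g_i(\theta)$, $\partial_\theta g_i(\theta)$, which propagate through the linear recursion since $\gamma<\gamma_u$), yields $\sup_\theta|L_n^{GH}(\theta)-L^{GH}(\theta)|\toP0$ with $L^{GH}(\theta)=-\tfrac12 E[\log g_1(\theta)+g_1(\theta_0)/g_1(\theta)]$.

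\emph{Identification.} By $u-1\ge\log u$, $L^{GH}(\theta_0)-L^{GH}(\theta)=\tfrac12 E\big[g_1(\theta_0)/g_1(\theta)-1-\log(g_1(\theta_0)/g_1(\theta))\big]\ge0$, with equality iff $g_1(\theta)=g_1(\theta_0)$ a.s. It remains to show $g_i(\theta)=g_i(\theta_0)$ a.s. for all $i$ forces $\theta=\theta_0$. Subtracting the recursions~\eqref{g_n} for $\theta$ and $\theta_0$, and conditioning on $\mathcal F_{i-2}$ using $E[Z_{i-1}\mid\mathcal F_{i-2}]=0$ and $E[Z_{i-1}^2\mid\mathcal F_{i-2}]=g_{i-1}(\theta_0)$ from Proposition~\ref{prop:1}(b), one obtains (from non‑degeneracy of $g_{i-1}(\theta_0)$) that $\omega_1^g(\theta)=\omega_1^g(\theta_0)$ and $\gamma-\gamma_0+\beta_1^g(\theta)-\beta_1^g(\theta_0)=0$; substituting back leaves $(\beta_1^g(\theta)-\beta_1^g(\theta_0))(Z_{i-1}^2-g_{i-1}(\theta_0))+(\alpha_1^g(\theta)-\alpha_1^g(\theta_0))Z_{i-1}=0$ a.s., and since $Z_{i-1}$ is not a.s. a deterministic function of $Z_{i-1}^2$, both increments vanish, so $\gamma=\gamma_0$; finally the smooth map $\theta\mapsto(\omega_1^g,\gamma,\beta_1^g,\alpha_1^g)$ is injective on $\Theta$, giving $\theta=\theta_0$. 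Combined with Steps 1--2, $\widetilde L_{n,m}^{GH}$ converges uniformly on (the closure of) $\Theta$ to $L^{GH}$, which is uniquely maximized at $\theta_0$; the argmax theorem yields $\tilde\theta^{GH}\toP\theta_0$, and uniqueness of the finite‑sample maximizer follows from local strict concavity (negative‑definite limiting Hessian, below).

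\emph{Rate (part (b)).} Since $\theta_0$ is interior and $\tilde\theta^{GH}\toP\theta_0$, with probability tending to one $\nabla\widetilde L_{n,m}^{GH}(\tilde\theta^{GH})=0$, and a componentwise mean‑value expansion gives $\tilde\theta^{GH}-\theta_0=-\big[H_n\big]^{-1}\nabla\widetilde L_{n,m}^{GH}(\theta_0)$ with $H_n$ the matrix whose $j$‑th row is the $j$‑th row of $\nabla^2\widetilde L_{n,m}^{GH}(\bar\theta^{(j)})$, $\bar\theta^{(j)}$ on the segment. A direct computation gives $\nabla\widetilde L_{n,m}^{GH}(\theta_0)=\frac1{2n}\sum_{i=1}^n g_i(\theta_0)^{-2}\nabla g_i(\theta_0)\big(RV_i-g_i(\theta_0)\big)$; splitting $RV_i-g_i(\theta_0)$ as before, the $(RV_i-\int_{i-1}^i\sigma_t^2\,dt)$ part is $O_p(m^{-1/4})$ by Assumption~\ref{assumption}(f) and the moment bounds on $\nabla g_i(\theta_0)/g_i(\theta_0)^2$, while the $D_i$ part is a martingale‑difference average with conditional second moment bounded via Assumption~\ref{assumption}(c) (which gives $E[D_i^2\mid\mathcal F_{i-1}]\le Cg_i(\theta_0)^2$), hence $O_p(n^{-1/2})$ by Burkholder's inequality. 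For $H_n$, uniform convergence of the Hessian and $\bar\theta^{(j)}\toP\theta_0$ give $H_n\toP\nabla^2 L^{GH}(\theta_0)=-\tfrac12 E\big[g_1(\theta_0)^{-2}\nabla g_1(\theta_0)\nabla g_1(\theta_0)^{\top}\big]$ (the terms involving $\nabla^2 g_1$ cancel at $\theta_0$), which is negative definite precisely because the components of $\nabla g_1(\theta_0)/g_1(\theta_0)$ are linearly independent — the same non‑degeneracy used in identification. Thus $H_n^{-1}=O_p(1)$ and $\|\tilde\theta^{GH}-\theta_0\|_{\max}=O_p(m^{-1/4}+n^{-1/2})$.

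\emph{Main obstacle.} The crux is the identification / non‑degeneracy step: because of the new asymmetry term $\alpha_1^g Z_{i-1}$ one must separate $\alpha$ from $\beta$, which requires the genuinely probabilistic fact that $Z_{i-1}$ is not a.s. a deterministic function of $Z_{i-1}^2$ (together with non‑degeneracy of $g_{i-1}(\theta_0)$), rather than the algebraic peeling‑off that sufficed for the symmetric GARCH‑It\^o model of \cite{KW:2016}. Secondary, but routine, is the bookkeeping for the uniform‑in‑$\theta$ moment and continuity bounds on $g_i(\theta)$, $1/g_i(\theta)$, $\nabla g_i(\theta)$, $\nabla^2 g_i(\theta)$ (propagated through the linear recursion using $\gamma<\gamma_u<1$ and enough moments of $Z_i$, supplied by the stationarity/ergodicity assumption) and the verification that the box constraints of Assumption~\ref{assumption}(a) are compatible with $\omega_1^g>(\alpha_1^g)^2/(4\beta_1^g)$, which is what makes $\inf_{i,\theta}g_i(\theta)>0$ and hence $\log g_i(\theta)$ legitimate.
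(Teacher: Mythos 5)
Your proposal is essentially correct, and part (b) follows the same route as the paper: a mean-value expansion of the score, the decomposition of $\tilde{\psi}_{n,m}^{GH}(\theta_{0})$ into an $RV_{i}-\int_{i-1}^{i}\sigma_{t}^{2}dt$ piece of order $m^{-1/4}$ (Assumption 1(f) plus H\"{o}lder and the moment bounds on $\partial_{\theta}g_{i}/g_{i}$) and a martingale piece driven by $D_{i}$ of order $n^{-1/2}$ (using $E[D_{i}^{2}|\mathcal{F}_{i-1}]\leq C\,E[Z_{i}^{4}|\mathcal{F}_{i-1}]\leq Cg_{i}^{2}(\theta_{0})$ under (c)), and invertibility of the Hessian, which the paper secures through Lemma 3(b) (positive definiteness of $-\nabla\psi_{n}^{GH}(\theta_{0})$ for $n\geq 4$) rather than through a population Hessian. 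For part (a) you take a genuinely different route: you pass to the population criterion $L^{GH}(\theta)=-\tfrac12E[\log g_{1}(\theta)+g_{1}(\theta_{0})/g_{1}(\theta)]$, identify $\theta_{0}$ by subtracting the recursions and conditioning on $\mathcal{F}_{i-2}$ (using $E[Z_{i-1}|\mathcal{F}_{i-2}]=0$, $E[Z_{i-1}^{2}|\mathcal{F}_{i-2}]=g_{i-1}(\theta_{0})$, and that $Z_{i-1}$ is not a deterministic function of $Z_{i-1}^{2}$), and then apply an argmax theorem; the paper instead works directly with the finite-sample criterion $L_{n}^{GH}$, shows any maximizer must satisfy $g_{i}(\theta^{*})=g_{i}(\theta_{0})$ a.s.\ for all $i$, and kills the difference of parameters through the a.s.\ full rank of the matrix with rows $(1,g_{i}(\theta_{0}),Z_{i}^{2},Z_{i})$, plus strict monotonicity of $\beta\mapsto\beta_{1}^{g}$ and $\alpha\mapsto\alpha_{1}^{g}$, before invoking Lemma 2 and Theorem 1 of Xiu (2010). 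Your conditional-moment peeling and the paper's full-rank argument rest on the same nondegeneracy of the $Z_{i}$'s, so the two are close in substance; your version makes the probabilistic content of separating $\alpha$ from $\beta$ more explicit, while the paper's version has the advantage of proving exactly what the theorem states, namely uniqueness of the maximizer of $L_{n}^{GH}$ itself.

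That last point is the one place where your write-up falls short of the literal claim: deducing uniqueness of the finite-sample maximizer of $L_{n}^{GH}$ from "local strict concavity" of the limiting Hessian does not work, since negative definiteness at $\theta_{0}$ only gives local uniqueness near $\theta_{0}$, not uniqueness over $\Theta$. The fix is immediate and is what the paper does: since each map $g\mapsto\log g+g_{i}(\theta_{0})/g$ is uniquely minimized at $g=g_{i}(\theta_{0})$, any maximizer of $L_{n}^{GH}$ satisfies $g_{i}(\theta^{*})=g_{i}(\theta_{0})$ a.s.\ for $i=1,\dots,n$, and your own peeling argument (or the full-rank matrix argument) applied to these finitely many identities already forces $\theta^{*}=\theta_{0}$, so no appeal to the Hessian is needed. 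With that repair, and with the routine bookkeeping you flag (uniform lower bound on $g_{i}(\theta)$ over $\Theta$, stationarity only up to a geometrically decaying initialization effect), your argument delivers the theorem.
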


Theorem 1 shows that $\tilde{\theta}^{GH}$ has the same convergence rate as the parameter estimators in GARCH-It\^{o} model of \cite{KW:2016}. In other words, the asymmetry information has no significant effect on the converge rate of the parameter estimators. The following theorem shows the asymptotic normality of $\tilde{\theta}^{GH}$.

\begin{theorem}
	Under Assumption 1, we have as $m, n\rightarrow \infty$,
	\begin{align*}
	\sqrt{n}(\tilde{\theta}^{GH}-\theta_{0})\xrightarrow{d}N(0,B^{-1}A^{GH}B^{-1}),
	\end{align*}
	where
	\begin{align*}
	A^{GH}=&\frac{1}{4}E\left[\frac{\partial g_{1}(\theta)}{\partial\theta}\frac{\partial
		g_{1}(\theta)}{\partial\theta^{T}}\bigg|_{\theta=\theta_{0}}g_{1}^{-4}(\theta_{0})\right.\\
	&\left.\int_{0}^{1}\left[4(e^{\beta_{0}(1-t)}-1)^{2}(X_{t}-X_{0})^{2}
	+\frac{4\alpha_{0}}{\beta_{0}}
	(e^{\beta_{0}(1-t)}-1)^{2}(X_{t}-X_{0})+\frac{\alpha_{0}^{2}}{\beta_{0}^{2}}(e^{\beta_{0}(1-t)}-1)^{2}
	\right]\sigma_{t}^{2}dt\right],\\
	B=&\frac{1}{2}E\left[\frac{\partial g_{1}(\theta)}{\partial\theta}\frac{\partial g_{1}(\theta)}{\partial\theta^{T}}\bigg|_{\theta=\theta_{0}}g_{1}^{-2}(\theta_{0}) \right].
	\end{align*}
\end{theorem}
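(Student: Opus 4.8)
The plan is the classical Taylor-expansion argument for quasi-maximum-likelihood estimators, adapted to the fact that the criterion $\widetilde L_{n,m}^{GH}$ is built from the observable realized volatilities $RV_i$ rather than from the latent $\int_{i-1}^{i}\sigma_t^2\,dt$. Write $s_n(\theta)=\partial\widetilde L_{n,m}^{GH}(\theta)/\partial\theta$ and $H_n(\theta)=\partial^2\widetilde L_{n,m}^{GH}(\theta)/\partial\theta\,\partial\theta^{T}$. By Assumption \ref{assumption}(a) and the consistency in Theorem 1, $\tilde\theta^{GH}$ is eventually an interior maximizer, so $s_n(\tilde\theta^{GH})=0$; a componentwise mean-value expansion then gives $\sqrt n(\tilde\theta^{GH}-\theta_0)=-H_n(\bar\theta)^{-1}\,\sqrt n\,s_n(\theta_0)$, where $H_n(\bar\theta)$ has its rows evaluated at points $\bar\theta$ on the segment joining $\tilde\theta^{GH}$ to $\theta_0$, so that $\bar\theta\toP\theta_0$. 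It therefore suffices to show (i) $\sqrt n\,s_n(\theta_0)\xrightarrow{d}N(0,A^{GH})$ and (ii) $H_n(\bar\theta)\toP-B$, after which Slutsky's theorem (with $B$ nonsingular, an identifiability condition built into the setup) gives the claim.

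For (i), differentiating gives $s_n(\theta_0)=-\frac1{2n}\sum_{i=1}^n\big(g_i(\theta_0)-RV_i\big)g_i^{-2}(\theta_0)\,\partial_\theta g_i(\theta_0)$, where $g_i(\theta_0)$ and $\partial_\theta g_i(\theta_0)$ are $\mathcal F_{i-1}$-measurable by the recursion (\ref{g_n}). Using Proposition 1, $\int_{i-1}^{i}\sigma_t^2\,dt=g_i(\theta_0)+D_i$ with $E[D_i\mid\mathcal F_{i-1}]=0$, hence $g_i(\theta_0)-RV_i=-D_i+\big(\int_{i-1}^{i}\sigma_t^2\,dt-RV_i\big)$ and
\begin{align*}
\sqrt n\,s_n(\theta_0)=\frac1{2\sqrt n}\sum_{i=1}^n\frac{D_i}{g_i^{2}(\theta_0)}\frac{\partial g_i(\theta_0)}{\partial\theta}-\frac1{2\sqrt n}\sum_{i=1}^n\frac{\int_{i-1}^{i}\sigma_t^2\,dt-RV_i}{g_i^{2}(\theta_0)}\frac{\partial g_i(\theta_0)}{\partial\theta}.
\end{align*}
The second sum is $o_p(1)$: once one shows that $g_i^{-1}(\theta_0)$ and $\|\partial_\theta g_i(\theta_0)\|$ have moments bounded uniformly in $i$ (from the parameter space in Assumption \ref{assumption}(a), the recursion (\ref{g_n}) and the stationarity of Assumption \ref{assumption}(d)), Assumption \ref{assumption}(f) bounds each summand by $O(m^{-1/4})$ in $L_{1+\delta}$, so the sum is $O_p(\sqrt n\,m^{-1/4})=O_p\big((n^2m^{-1})^{1/4}\big)\to0$ by Assumption \ref{assumption}(e). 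The first sum is a martingale-difference array; a martingale CLT applies, its Lindeberg (or Lyapunov) condition being supplied by Assumption \ref{assumption}(c) together with the uniform moment bounds just mentioned, and the conditional-variance sum $\frac1{4n}\sum_i g_i^{-4}(\theta_0)\,\partial_\theta g_i(\theta_0)\,\partial_{\theta^T}g_i(\theta_0)\,E[D_i^{2}\mid\mathcal F_{i-1}]$ converging in probability, by the ergodic theorem (Assumption \ref{assumption}(d)), to $A^{GH}$. The identification of $A^{GH}$ uses the Itô isometry: collecting the two stochastic integrals in $D_1$ as $D_1=\int_0^1\big[2(e^{\beta_0(1-t)}-1)\int_0^t\sigma_s\,dB_s+\tfrac{\alpha_0}{\beta_0}(e^{\beta_0(1-t)}-1)\big]\sigma_t\,dB_t$ gives $E[D_1^{2}\mid\mathcal F_0]=E\big[\int_0^1\big(4(e^{\beta_0(1-t)}-1)^2(X_t-X_0)^2+\tfrac{4\alpha_0}{\beta_0}(e^{\beta_0(1-t)}-1)^2(X_t-X_0)+\tfrac{\alpha_0^{2}}{\beta_0^{2}}(e^{\beta_0(1-t)}-1)^2\big)\sigma_t^2\,dt\mid\mathcal F_0\big]$ (writing $X_t-X_0$ for $\int_0^t\sigma_s\,dB_s$, the drift being immaterial here), and the tower property turns $\tfrac14E\big[g_1^{-4}(\theta_0)\,\partial_\theta g_1\,\partial_{\theta^T}g_1\,E[D_1^2\mid\mathcal F_0]\big]$ into exactly the stated $A^{GH}$.

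For (ii), a second differentiation gives $H_n(\theta)=-\frac1{2n}\sum_{i=1}^n\big[\big(g_i^{-1}(\theta)-RV_i\,g_i^{-2}(\theta)\big)\partial^2_{\theta\theta^T}g_i(\theta)+\big(2RV_i\,g_i^{-3}(\theta)-g_i^{-2}(\theta)\big)\partial_\theta g_i(\theta)\,\partial_{\theta^T}g_i(\theta)\big]$. Substituting $RV_i=g_i(\theta_0)+D_i+(RV_i-\int_{i-1}^{i}\sigma_t^2\,dt)$ at $\theta=\theta_0$, the coefficient of $\partial^2_{\theta\theta^T}g_i$ becomes $-D_i\,g_i^{-2}(\theta_0)$ plus a high-frequency remainder, and the coefficient of $\partial_\theta g_i\,\partial_{\theta^T}g_i$ becomes $g_i^{-2}(\theta_0)+2D_i\,g_i^{-3}(\theta_0)$ plus a high-frequency remainder; the $D_i$-terms vanish in probability (ergodic theorem for martingale differences, using the uniform moment bounds on the derivatives of $g_i$), the high-frequency remainders vanish by Assumptions \ref{assumption}(e)–(g), and the ergodic theorem applied to what remains yields $H_n(\theta_0)\toP-\tfrac12E[g_1^{-2}(\theta_0)\,\partial_\theta g_1(\theta_0)\,\partial_{\theta^T}g_1(\theta_0)]=-B$. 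To move from $\theta_0$ to the random $\bar\theta$, I would establish a uniform law of large numbers for $H_n(\theta)$ on a compact neighbourhood of $\theta_0$ — continuity in $\theta$ plus an integrable envelope supplied by the same moment bounds — and combine it with $\bar\theta\toP\theta_0$ to get $H_n(\bar\theta)\toP-B$. Then $\sqrt n(\tilde\theta^{GH}-\theta_0)=-H_n(\bar\theta)^{-1}\sqrt n\,s_n(\theta_0)\xrightarrow{d}B^{-1}N(0,A^{GH})=N(0,B^{-1}A^{GH}B^{-1})$.

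The main obstacle I anticipate is the uniform-in-$i$ control of $g_i(\theta)$, $g_i^{-1}(\theta)$, $\partial_\theta g_i(\theta)$ and $\partial^2_{\theta\theta^T}g_i(\theta)$ in the relevant $L_p$ norms: these solve recursions of the type (\ref{g_n}) whose stability rests on $0<\beta_0<1$ and on $\gamma_0$ lying in the admissible range, together with the moment conditions on $Z_i$ in Assumption \ref{assumption}(c)–(d). Every ``averages to zero'' and ``converges by ergodicity'' step above, and the martingale CLT's Lindeberg condition, are legitimate only once these bounds are in place. The secondary obstacle is the bookkeeping that the $RV_i-\int_{i-1}^{i}\sigma_t^2\,dt$ errors enter the score only at aggregate order $o_p(n^{-1/2})$ — which is precisely what $n^2m^{-1}\to0$ buys — and disappear from the Hessian limit; once these are handled, the martingale CLT and the Itô-isometry identification of $A^{GH}$ are routine.
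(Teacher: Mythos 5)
Your proposal follows essentially the same route as the paper's proof: a mean-value expansion of the score, the decomposition of $RV_i$ into $g_i(\theta_0)+D_i$ plus a realized-volatility error that aggregates to $O_p(\sqrt{n}\,m^{-1/4})=o_p(1)$ under Assumption 1(e)--(f), a stationary-ergodic martingale CLT (via Cram\'{e}r--Wold) for the $D_i$-part of the score with limit $A^{GH}$, the ergodic-theorem limit $-B$ for the Hessian carried over from $\theta_0$ to the intermediate point, and Slutsky. The only cosmetic differences are that you spell out the It\^{o}-isometry identification of $A^{GH}$ and use a uniform law of large numbers where the paper invokes the third-derivative bounds from its Lemma 3, so the argument is correct and matches the paper's.
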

According to Theorem 2, we could see that $A^{GH}$ represent the influences of volatility asymmetry on the asymptotic variances of the parameter estimators.

\section{SIMULATION STUDY}
\subsection{Asymptotic Performance}
In this section, we generate the log prices $X_{t_{i,j}}, ~t_{i,j}= i-1+j/m,~ i=1, \cdots, n,~ j=1, \cdots, m$,  from the GQARCH-It\^{o} models with $\theta_{0}=(\omega_{0},\beta_{0},\gamma_{0},\alpha_{0})=(0.2,0.3,0.4,0.1)$. We further set $n=250, m=2160$, and $X_0=10$, which implies that $\sigma_{0}^{2}=0.667$. The high-frequency data $Y_{t_{i,j}}$ are obtained through model (\ref{eq:y}) by simulating $\epsilon_{t_{i,j}}$ from  $N(0, 0.001^2)$.  The multi-scale realized volatility estimator is used to estimate $RV_{i}$. All boxplots of estimators are displayed in Figure \ref{fig_boxplot}. We can see clearly that our proposed estimators have good statistical performance and the simulation confirms most of the theoretical findings in Section 3.

\begin{figure}[!htb]
	\centering
	\begin{minipage}[t]{5.6cm}
		\includegraphics[angle=0,width=5.5cm]{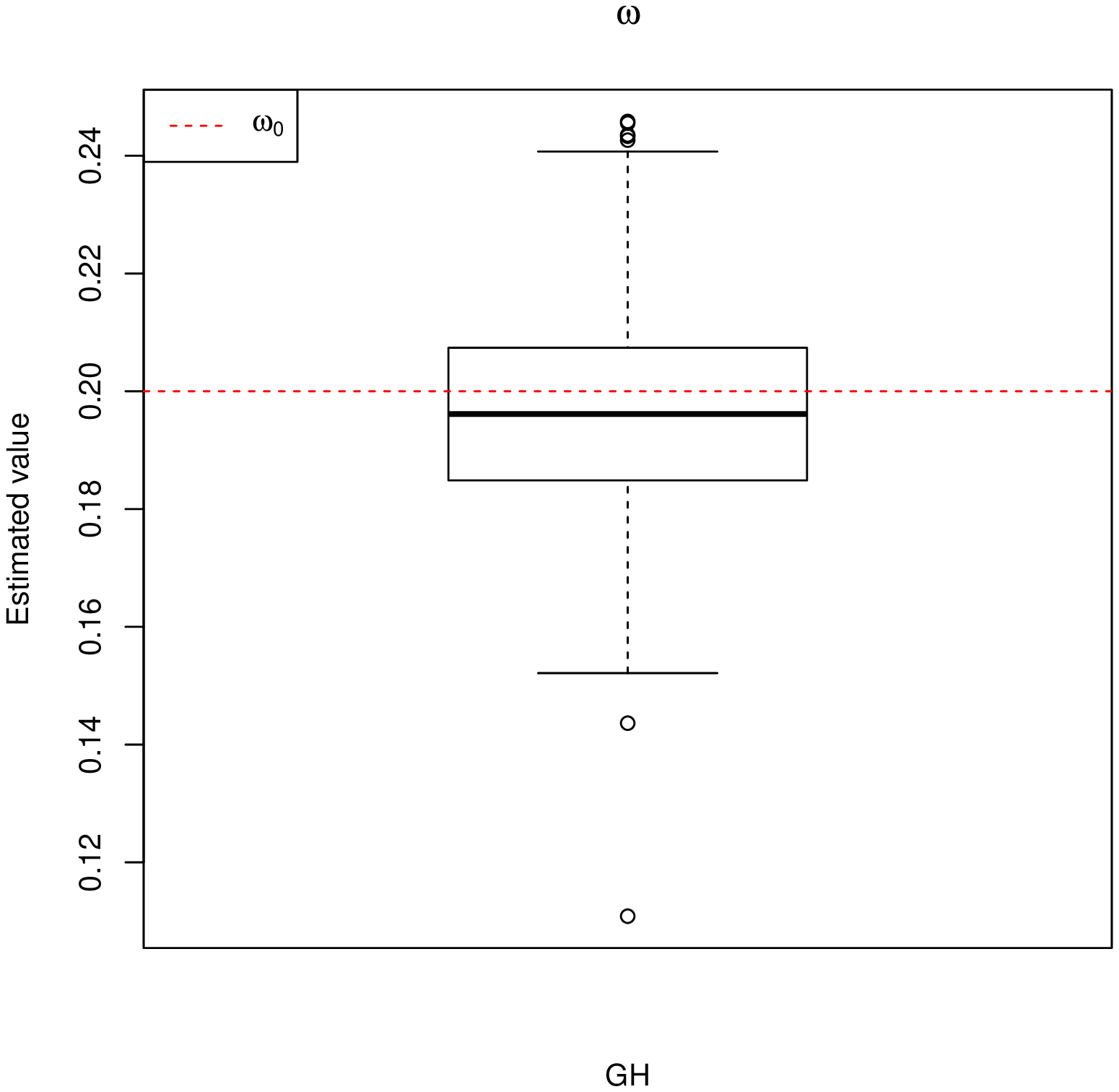}
	\end{minipage}%
	\begin{minipage}[t]{5.6cm}
		\includegraphics[angle=0,width=5.5cm]{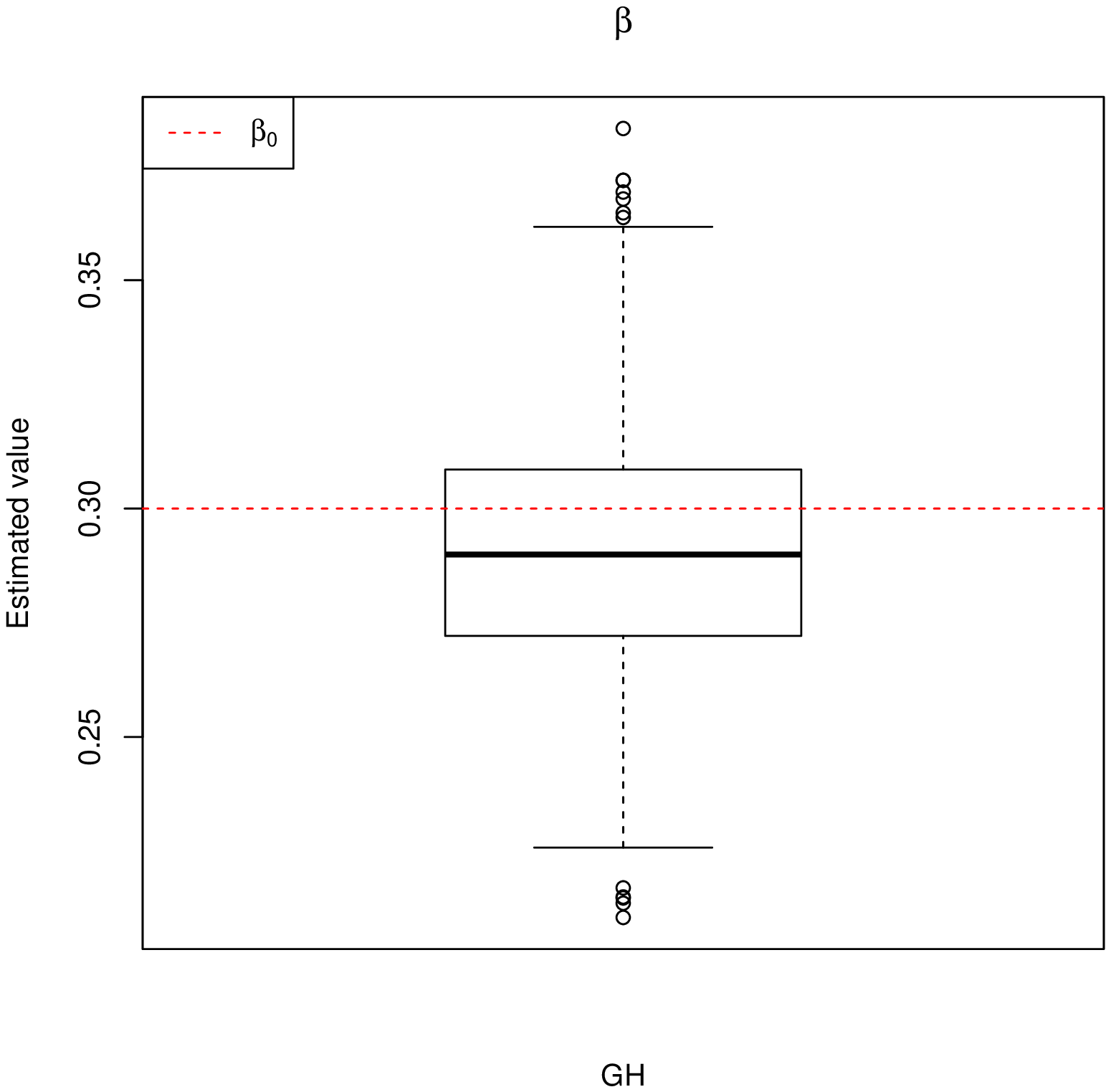}
	\end{minipage}
	
	\begin{minipage}[t]{5.6cm}
		\includegraphics[angle=0,width=5.5cm]{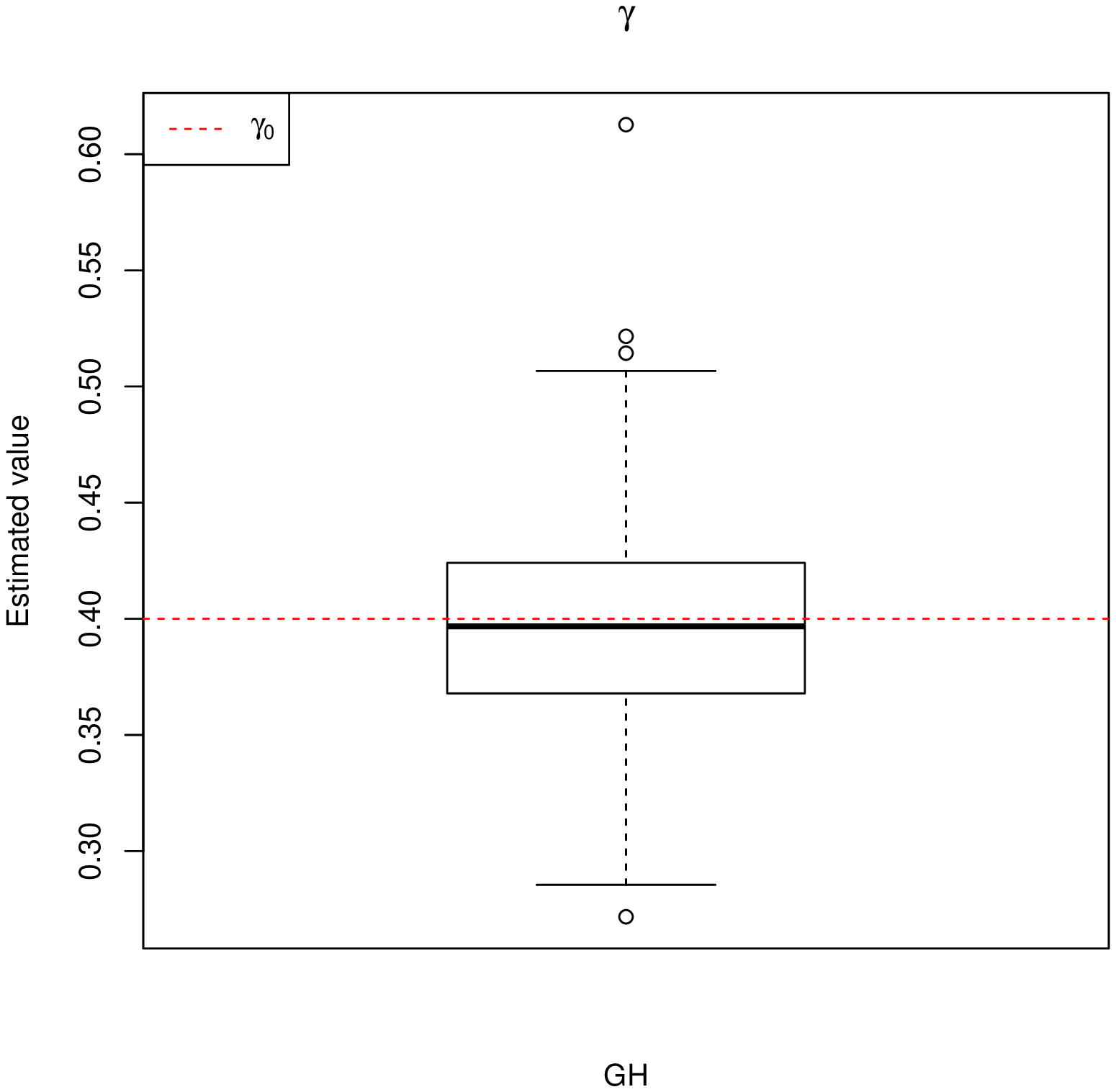}
	\end{minipage}
	\begin{minipage}[t]{5.6cm}
		\includegraphics[angle=0,width=5.5cm]{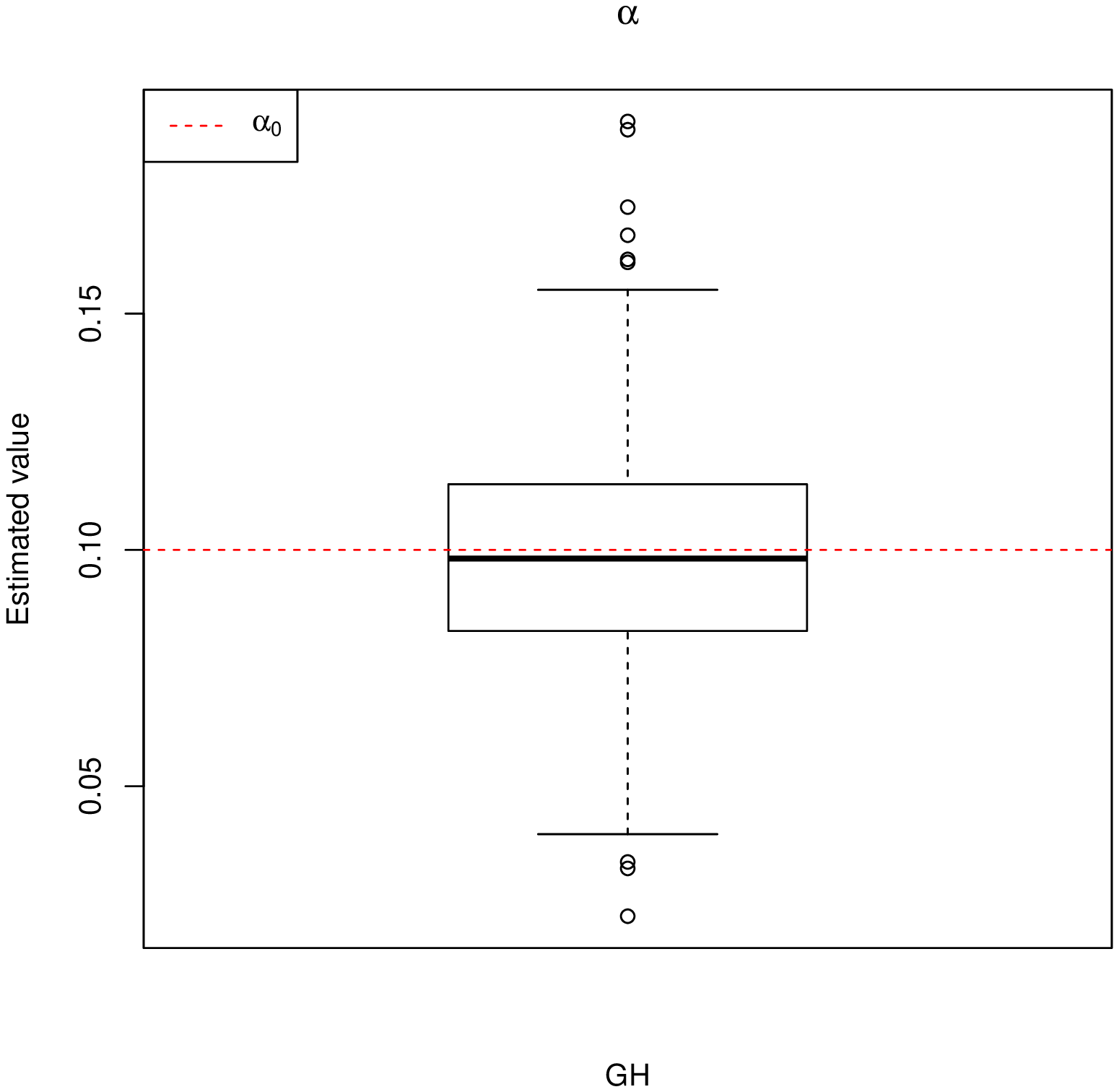}
	\end{minipage}
	\caption{Boxplots of $\tilde{\theta}^{GH}$ for estimating $\theta_{0}=(\omega_{0},\beta_{0},\gamma_{0},\alpha_{0})$ base on simulated data, where GH represents $\tilde{\theta}^{GH}$.}
	\label{fig_boxplot}
\end{figure}
\subsection{Prediction Performance under different theoretical volatility models}
In this subsection, we simulate the sample data from two theoretical volatility models, Heston model and Jump-diffusion model. Under each theoretical volatility model, we simulate high-frequency time interval for 10 seconds, and we report the out-of-sample prediction performances of GQARCH-It\^{o} model for  $\frac{1}{j}, j=1,2,3,4,5,6$ of daily volatility.

\subsubsection{Heston stochastic volatility model}
In this Monte Carlo experiment, we use as the data generating process the stochastic volatility model of Heston(1993) for the instantaneous variance
\begin{align}
\begin{split}
dS(t)=&rS(t)dt+\sqrt{V(t)}S(t)dW_{1}(t)\\
dV(t)=&(a-bV(t))dt+\gamma\sqrt{V(t)}dW_{2}(t).
\end{split}
\end{align}
parameter $a,b,\gamma$ is positive, $W_{1}(t), W_{2}(t)$ are Brownian motions, and $\rho $ is the correlation coefficient between $W_{1}(t)$ and $W_{2}(t)$.
We set parameters $a_{0}=0.01$, $b_{0}=0.001$, $\gamma_{0}=0.075$, $\rho_{0}=-0.8$, and $r_{0}=0.02$. We further set $n=101, S_0=50, V_0=0.05$. We simulated the high-frequency data with 10 seconds interval, and $\frac{1}{2}$, $\frac{1}{3}$, $\frac{1}{4}$, $\frac{1}{5}$ and $\frac{1}{6}$ of daily volatility forecasting results are also presented in the studies.
All simulation is based 1000 repetitions, and the first 100 days are used for in-sample, and the 101th day is saved for the out-of-sample forecast of daily volatility.
We define the following four criteria to evaluate the forecasting error, which are mean absolute error (MAE), mean square error (MSE), adjusted mean absolute percentage error (AMAPE) and logarithmic loss (LL),
\begin{align*}
&MAE=\frac{1}{N}\sum_{i=1}^{N}|RV_{i}-F_i|,\quad &MSE=\frac{1}{N}\sum_{i=1}^{N}(RV_{i}-F_i)^{2},\\
&AMAPE=\frac{1}{N}\sum_{i=1}^{N}\left|\frac{F_i-RV_{i}}{F_i+RV_{i}}\right|,\quad &LL=\frac{1}{N}\sum_{i=1}^{N}(\log(F_i)-\log(RV_{i}))^{2},
\end{align*}
where the realized volatility $RV_i$ is considered as the best estimation of the real integrated volatility in day $i$, $F_i$ is the volatility prediction in day $i$.
All results are presented in Table \ref{table_error_heston 10 seconds 1}.

According to Table \ref{table_error_heston 10 seconds 1}, we may get some interesting findings. First, our proposed model could predict the $\frac{1}{2}$, $\frac{1}{3}$, $\frac{1}{4}$, $\frac{1}{5}$ and $\frac{1}{6}$ of daily volatility. This is very important in financial market since that a large number of securities' practitioners would like to know the volatility performances for next half day, one-third days, one-fourth days, one-fifth days and even one-sixth days rather than future daily volatility. Second, the $\frac{1}{2}$, $\frac{1}{3}$, $\frac{1}{4}$, $\frac{1}{5}$ and $\frac{1}{6}$ of daily volatility prediction results have better performances than future daily volatility via the values of MAE, MSE, APAME  and LL.

\subsubsection{Jump-diffusion model}
The ture price of the security is assumed to obey the following Jump-diffusion model,
\begin{align}
\begin{split}
dS(t)=&rS(t)dt+\sqrt{V(t)}S(t)dW_{1}(t)\\
dV(t)=&(a-bV(t))dt+\gamma\sqrt{V(t)}dW_{2}(t)+dJ_{t},
\end{split}
\end{align}
where $J_{t}=\sum_{i=1}^{N_{{t}}}U_{i}$ is a compound Poisson process, $\{N_t\}$ is a Poisson process with intensity $\lambda$, and random variables $\{U_{i}\}$ are independent,  following the same distribution  $N(0,\sigma_{J}^{2})$. Besides choosing the same parameter values as in Heston model, we further set $\lambda=1$ and $\sigma_J=0.01$. The similar forecasting results  are represented in Table \ref{table_error_heston 10 seconds 2}.

\section{EMPIRICAL STUDY}

In this section, we try to illustrate the volatility predictor power with trading data second-by-second for CRSP total market index, and  data source is from Wharton Research Data Service (WRDS). The period is from January 2, 2018 to December 31, 2018. The number of analyzed high-frequency data is 5803200, and the number of low-frequency datais 248. All high-frequency prices are transformed into log prices $\log (P_{t_{i,j}})$, $t_{i,j}=i-1+j/m$, $i=1,\ldots,n$, $j=1,\ldots,m$ with $n=228$, $m=23400$.

First, we divide the data into in-sample data and out-sample data. The in-sample period starts from January 2, 2018 to August 31, 2018, which contains 3931200 high-frequency prices and 168 days. The out-of-sample period starts  from September 4, 2018 to December 31, 2018, which contains 187200 high-frequency prices and 80 days. We would explore the volatility forecasting for $\frac{1}{j}, j=1,2,3,4,5,6$ of daily volatility in out-of-sample period. To illustrate the prediction power of the GQARCH-It\^{o} model, we also compute the forecasts of GARCH-It\^{o} model.

Because our proposed GQARCH-It\^{o} model explains the volatility asymmetry, its volatility prediction power is stronger than the one of GARCH-It\^{o} model via comparison Table \ref{table_error 1} and Table \ref{table_error 2}.

\section{CONCLUSION}
In this paper, we introduce a novel GQARCH-It\^{o} model, which could explain volatility asymmetry. Model parameters in the GQARCH-It\^{o} model are estimated by maximizing a quasi-likelihood function. In simulation study and empirical study, we show that GQARCH-It\^{o} model could predict $\frac{1}{j}, j=1,2,3,4,5,6$ of daily volatility prediction. More importantly, the proposed GQARCH-It\^{o} model has stronger forecasting power than GARCH-It\^{o} model.

The proposed GQARCH-It\^{o} model can be also extended in several other directions. First, option data is also another important information for volatility prediction. The new model would have better performance if it consists option data. Second, the parameters of GQARCH-It\^{o} model has the same convergence rate as in GARCH-It\^{o} model, therefore, the quasi-likelihood function should be improved in estimating the model¡¯s parameters. Machine learning is the scientific study of algorithms and statistical models that computer systems use to effectively perform a specific task without using explicit instructions, relying on patterns and inference instead. It seems that some more efficient methodologies could be proposed combining vast amounts of data and machine learning.

\section*{ACKNOWLEDGEMENTS} 
Cui's work was partially supported by \textit{National Natural Science Foundation of China (71671106),  and Zhou's work was partially supported by the State Key Program of National Natural Science Foundation of China (71931004), the State Key Program in the Major Research Plan of National Natural Science Foundation of China (91546202)}.

\section*{DATA AVAILABILITY STATEMENT} 
The data that support the findings of this study are available in [Wharton Research Data Services] at [https://wrds-www.wharton.upenn.edu]. Please refer to the supporting information for details.

\section*{SUPPORTING INFORMATION} 

The data are downloaded from CRSP total market index(January 2, 2018 to December 31, 2018) in Wharton Research Data Services, and  the details may be founded online in the supporting information tab for this article.



\newpage
\begin{table}[H]
	\caption{$\frac{1}{j}, j=1,2,3,4,5,6$ of daily volatility prediction performance for GQARCH-It\^{o} model under Heston model, the time interval is 10 seconds. }
	\begin{tabular}{ccccccc}
		\hline
		\hline
		&MAE& MSE &  AMAPE &   LL&\\
		\hline
		QGARCH-It\^{o}(10 seconds, 1 day)&{\bf1.723e-04}&{\bf6.887e-08 }& {\bf 0.147}  & {\bf0.147}  \\
		\hline
		QGARCH-It\^{o}(10 seconds,  $\frac{1}{2}$ day)&6.601e-05&3.266e-08& 0.084 & 0.054 \\
		QGARCH-It\^{o}(10 seconds,  $\frac{1}{3}$ day)&4.350e-05&3.661e-09& 0.103 & 0.070 \\
		QGARCH-It\^{o}(10 seconds,  $\frac{1}{4}$ day)&3.465e-05&8.543e-09& 0.107 & 0.071 \\
		QGARCH-It\^{o}(10 seconds,  $\frac{1}{5}$ day)&4.260e-05&4.471e-08& 0.119 & 0.135 \\
		QGARCH-It\^{o}(10 seconds,  $\frac{1}{6}$ day)&2.831e-05&2.611e-09& 0.128 & 0.104 \\
		\hline
		\hline
	\end{tabular}
	\label{table_error_heston 10 seconds 1}
\end{table}
\begin{table}[H]
	\caption{$\frac{1}{j}, j=1,2,3,4,5,6$ of daily volatility prediction performance for GQARCH-It\^{o} model under Heston jump model, the time interval is 10 seconds. }
	\begin{tabular}{ccccccc}
		\hline
		\hline
		&MAE& MSE &  AMAPE &   LL&\\
		\hline
		QGARCH-It\^{o}(10 seconds, 1 day)&{\bf2.104e-04}&{\bf1.153e-07 }& {\bf 0.153}  & {\bf0.158}  \\
		\hline
		QGARCH-It\^{o}(10 seconds,  $\frac{1}{2}$ day)&8.763e-05&3.030e-08& 0.106 & 0.087 \\
		QGARCH-It\^{o}(10 seconds,  $\frac{1}{3}$ day)&3.917e-05&2.500e-08& 0.088 & 0.058 \\
		QGARCH-It\^{o}(10 seconds,  $\frac{1}{4}$ day)&2.787e-05&2.266e-08& 0.089 & 0.068 \\
		QGARCH-It\^{o}(10 seconds,  $\frac{1}{5}$ day)&2.174e-05&1.610e-09& 0.094 & 0.053 \\
		QGARCH-It\^{o}(10 seconds,  $\frac{1}{6}$ day)&2.246e-05&1.479e-08& 0.096 & 0.071 \\
		\hline
		\hline
	\end{tabular}
	\label{table_error_heston 10 seconds 2}
\end{table}
\begin{table}[H]
	\caption{$\frac{1}{j}, j=1,2,3,4,5,6$ of daily volatility prediction performance for CRSP based on GQARCH-It\^{o} model, the time interval is 1 second. }
	\begin{tabular}{ccccccc}
		\hline
		\hline
		&MAE& MSE &  AMAPE &   LL&\\
		\hline
		QGARCH-It\^{o}(1 day)&{\bf3.976e-05}&{\bf3.501e-09 }& {\bf 0.232}  & {\bf0.344}  \\
		\hline
		GQARCH-It\^{o}($\frac{1}{2}$ day)&7.294e-07&4.257e-11& 0.005 & 0.008 \\
		GQARCH-It\^{o}($\frac{1}{3}$ day)&5.163e-07&2.132e-11& 0.005 & 0.008 \\
		GQARCH-It\^{o}($\frac{1}{4}$ day)&5.550e-07&2.464e-11& 0.006 & 0.016 \\
		GQARCH-It\^{o}($\frac{1}{5}$ day)&4.713e-07&1.777e-11& 0.007 & 0.017 \\
		GQARCH-It\^{o}($\frac{1}{6}$ day)&4.563e-07&1.665e-11& 0.007 & 0.019 \\
		\hline
		\hline
	\end{tabular}
	\label{table_error 1}
\end{table}
\begin{table}[H]
	\caption{$\frac{1}{j}, j=1,2,3,4,5,6$ of daily volatility prediction performance for CRSP based on GARCH-It\^{o} model, the time interval is 1 second. }
	\begin{tabular}{ccccccc}
		\hline
		\hline
		&MAE& MSE &  AMAPE &   LL&\\
		\hline
		GARCH-It\^{o}(1 day)&{\bf4.466e-05}&{\bf4.601e-09 }& {\bf 0.255}  & {\bf0.422}  \\
		GARCH-It\^{o}($\frac{1}{2}$ day)&1.006e-06&8.105e-11& 0.006 & 0.012 \\
		GARCH-It\^{o}($\frac{1}{3}$ day)&8.523e-07&5.811e-11& 0.006 & 0.015 \\
		GARCH-It\^{o}($\frac{1}{4}$ day)&8.975e-07&6.444e-11& 0.008 & 0.027 \\
		GARCH-It\^{o}($\frac{1}{5}$ day)&5.900e-07&2.785e-11& 0.007 & 0.022 \\
		GARCH-It\^{o}($\frac{1}{6}$ day)&4.566e-07&1.666e-11& 0.007 & 0.019 \\
		\hline
		\hline
	\end{tabular}
	\label{table_error 2}
\end{table}

\section*{APPENDIX}

\subsection*{A1. Proof of Proposition 1 }

\begin{proof}
	(a) By It\^{o} Lemma, we have
	\begin{align*}
	d \beta\left(\int_{n-1}^{t}\sigma_{s}dB_{s}\right)^2 =2\beta\left(\int_{n-1}^{t}\sigma_{s}dB_{s}\right)\sigma_{t}dB_{t}+\beta\sigma_{t}^{2}dt.
	\end{align*}
	Then, for $n-\frac{1}{j}<t<n$, we further have
	\begin{align*}
	R(k)\equiv & \int_{n-\frac{1}{j}}^{n}\frac{(n-t)^{k}}{k!}\sigma_{t}^{2}dt\\
	= & \frac{ \omega +(\gamma +j(k+2)-1) \sigma_{n-\frac{1}{j}}^{2}}{(k+2)!}\left(\frac{1}{j}\right)^{k+2}+\beta R(k+1)+2 \beta
	\int_{n-\frac{1}{j}}^{n}\frac{(n-t)^{k+1}}{(k+1)!}\int_{n-\frac{1}{j}}^{t}\sigma_{s}dB_{s}\sigma_{t}dB_{t}\\
	+&\alpha \int_{n-\frac{1}{j}}^{n}\frac{(n-t)^{k+1}}{(k+1)!}\sigma_{t}dB_{t}.
	\end{align*}
	
	By the iteration of $R(k)$ , we can obtain
	\begin{align*}
	&\int_{n-\frac{1}{j}}^{n}\sigma_{t}^{2}dt\\
	=&\sum_{k=0}^{\infty} \frac{\beta^{k} \left[\omega +(\gamma +j(k+2)-1) \sigma_{n-\frac{1}{j}}^{2}\right]}{(k+2)!}\left(\frac{1}{j}\right)^{k+2}+ \sum_{k=0}^{\infty}2
	\int_{n-\frac{1}{j}}^{n}\frac{\left[\beta(n-t)\right]^{k+1}}{(k+1)!}\int_{n-\frac{1}{j}}^{t}\sigma_{s}dB_{s}\sigma_{t}dB_{t}\\
	+&\sum_{k=0}^{\infty}\int_{n-\frac{1}{j}}^{n}\frac{\alpha\left(n-t\right)^{k+1}}{(k+1)!}\beta^{k}\sigma_{t}dB_{t}\\
	=& g_{n}(\theta)+D_{n},
	\end{align*}
	where
	\begin{align}\label{eq:h_nj}
	\begin{split}
	g_{n}(\theta)=&\sum_{k=0}^{\infty} \frac{\beta^{k} \left[\omega +(\gamma +j(k+2)-1) \sigma_{n-\frac{1}{j}}^{2}\right]}{(k+2)!}\left(\frac{1}{j}\right)^{k+2}\\
	=&\omega_{j}^{g}+\frac{\gamma+j-1}{j} g_{n-\frac{1}{j}}(\theta)+\beta_{j}^{g}Z_{n-\frac{1}{j}}^{2}+\alpha_{j}^{g}Z_{n-\frac{1}{j}},
	\end{split}
	\end{align}
	and
	\begin{align*}
	\omega_{j}^{g}=&\beta^{-1}\left(e^{\frac{\beta}{j}}-1\right)\frac{\omega}{j},\quad \beta_{j}^{g}=\beta^{-1}(\gamma-1)\left(e^{\frac{\beta}{j}}-1-\frac{\beta}{j}\right)+e^{\frac{\beta}{j}}-1,\\
	\alpha_{j}^{g}=&\alpha\left(\beta^{-2}(\gamma-1)\left(e^{\frac{\beta}{j}}-1-\frac{\beta}{j}\right)+\beta^{-1}\left(e^{\frac{\beta}{j}}-1\right)\right),
	\end{align*}
	By Taylor expansion of $e^{(n-t)\beta}$ , $D_{n}$ can be written as
	\begin{align*}
	D_{n}=2\int_{n-\frac{1}{j}}^{n}\left(e^{(n-t)\beta}-1\right)\int_{n-\frac{1}{j}}^{t}\sigma_{s}dB_{s}\sigma_{t}dB_{t}+\frac{\alpha}{\beta}\int_{n-\frac{1}{j}}^{n}(e^{\left(n-t\right)\beta}-1)\sigma_{t}dB_{t},
	\end{align*}
	As the integrand of $D_n$ is predictable, $D_n$ is a martingale difference.
	
	(b) It is an immediate consequence of $E [D_n|\mathcal{F}_{n-1}] = 0$.
\end{proof}
\subsection*{A1. Proof of Theorem 1 }
Let
\begin{align*}
&\tilde{L}_{n,m}^{GH}(\theta)=-\frac{1}{2n}\sum_{i=1}^{n}\log(g_{i}(\theta))-\frac{1}{2n}\sum_{i=1}^{n}\frac{RV_{i}}{g_{i}(\theta)}\equiv-\frac{1}{2n}\sum_{i=1}^{n}\tilde{l}_{i}^{GH}(\theta)
\quad and \quad \tilde{\psi}_{n,m}^{GH}(\theta)=\frac{\partial\tilde{L}_{n,m}^{GH}(\theta) }{\partial\theta},\\
&\tilde{L}_{n}^{GH}(\theta)=-\frac{1}{2n}\sum_{i=1}^{n}\log(g_{i}(\theta))-\frac{1}{2n}\sum_{i=1}^{n}\frac{\int_{i-1}^{i}\sigma_{t}^{2}dt}{g_{i}(\theta)} \quad and \quad
\tilde{\psi}_{n}^{GH}(\theta)=\frac{\partial\tilde{L}_{n}^{GH}(\theta) }{\partial\theta},\\
&{L}_{n}^{GH}(\theta)=-\frac{1}{2n}\sum_{i=1}^{n}\log(g_{i}(\theta))-\frac{1}{2n}\sum_{i=1}^{n}\frac{g_{i}(\theta_{0})}{g_{i}(\theta)} \quad and \quad
\psi_{n}^{GH}(\theta)=\frac{\partial{L}_{n}^{GH}(\theta) }{\partial\theta},
\end{align*}
Let $(\omega_l^g,\omega_u^g)$, $(\beta_l^g,\beta_u^g)$, $(\gamma_l^g,\gamma_u^g)$ and $(\alpha_l^g,\alpha_u^g)$ be the lower bound and the upper bound of $\omega^g$, $\beta^g$, $\gamma$ and $\alpha^g$. To ease notations, we denote derivatives of any function $g$ at $x_{0}$ by
\begin{align*}
\frac{\partial g(x_{0})}{\partial x}=\frac{\partial g(x)}{\partial x}\bigg|_{x=x_{0}}.
\end{align*}
We first provide two useful lemmas.
	
\begin{lemma}
	Under Assumption 1 (a), (b), for the GQARCH-It\^{o} model, we have
	\\
	\\
	(a). there exists a neighborhood $B(\theta_{0})$ of $\theta_{0}$ such that for any $p\geq1$, $\sup\limits_{i\in \mathbb{N}}\left\| \sup\limits_{\theta\in
		B(\Theta_{0})}\frac{g_{i}(\theta_{0})}{g_{i}(\theta)}\right\|_{L_{p}}<\infty$ and $B(\theta_{0})\subset\Theta$.
	\\
	\\
	(b). for any $ p\geq 1$, $\sup\limits_{i\in \mathbb{N}}\left\| \sup\limits_{\theta\in\Theta}\frac{\partial g_{i}(\theta)}{\partial\theta_{j}}\right\|_{L_{p}}\leq C$,
	$\sup\limits_{i\in \mathbb{N}}$
	$\left\| \sup\limits_{\theta\in\Theta}\frac{\partial^{2} g_{i}(\theta)}{\partial\theta_{j}\partial \theta_{k}}\right\|_{L_{p}}\leq C
	$, and $\sup\limits_{i\in \mathbb{N}}\left\| \sup\limits_{\theta\in\Theta}\frac{\partial^{3} g_{i}(\theta)}{\partial\theta_{j}\partial \theta_{k}\partial
		\theta_{v}}\right\|_{L_{p}}\leq C$ for any $j,k,v \in \{1,2,3,4\}$, where $\theta=(\theta_{1},\theta_{2},\theta_{3},\theta_{4})=(\omega,\beta,\gamma,\alpha)$;
\end{lemma}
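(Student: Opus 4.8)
The plan is to establish both parts by unwinding the recursion defining $g_i(\theta)$ and controlling the two stochastic innovations $Z_{i-1}^2$ and $Z_{i-1}$ uniformly over $\theta\in\Theta$. Recall from Proposition~\ref{prop:1} (with $j=1$) that
\begin{align*}
g_i(\theta)=\omega_1^g+\gamma\, g_{i-1}(\theta)+\beta_1^g Z_{i-1}^2+\alpha_1^g Z_{i-1},
\end{align*}
where $\omega_1^g,\beta_1^g,\alpha_1^g$ are the explicit smooth functions of $(\omega,\beta,\gamma,\alpha)$ given in the statement. Iterating this linear recursion yields
\begin{align*}
g_i(\theta)=\gamma^{\,i-1}g_1(\theta)+\sum_{\ell=1}^{i-1}\gamma^{\,\ell-1}\left(\omega_1^g+\beta_1^g Z_{i-\ell}^2+\alpha_1^g Z_{i-\ell}\right).
\end{align*}
On $\Theta$ we have $\gamma\le\gamma_u<1$ (this is exactly where the compactness of $\Theta$ and the standard stationarity restriction enter), so the geometric weights are summable; moreover $\omega_1^g\ge\omega_l^g>0$ gives the uniform lower bound $g_i(\theta)\ge c>0$ provided we first check $\beta_1^g\ge 0$ and that the contribution of the $\alpha_1^g Z_{i-\ell}$ terms cannot drive $g_i$ to zero — this last point is the one genuinely delicate ingredient and I address it below.

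For part~(a): since $0<c\le g_i(\theta)$ uniformly, we have $g_i(\theta_0)/g_i(\theta)\le c^{-1}g_i(\theta_0)$, so it suffices to bound $\|g_i(\theta_0)\|_{L_p}$ uniformly in $i$. From the iterated formula this reduces to bounding $\|Z_{i}^2\|_{L_p}$ and $\|Z_i\|_{L_p}$ uniformly, together with $\|g_1(\theta_0)\|_{L_p}$. These moment bounds follow from the GQARCH-It\^o dynamics: $Z_i=\int_{i-1}^i\sigma_t dB_t$ and $E[Z_i^2\mid\mathcal F_{i-1}]=g_i(\theta_0)$ by Proposition~\ref{prop:1}(b), so by the Burkholder--Davis--Gundy inequality $\|Z_i\|_{L_{2p}}^{2}\le C\|\int_{i-1}^i\sigma_t^2dt\|_{L_p}=C\|g_i(\theta_0)+D_i\|_{L_p}$; combined with Assumption~\ref{assumption}(b) (uniform integrability of $|D_i|$, upgraded to the needed moments via the stationarity in (d)) and the self-referential recursion for $g_i(\theta_0)$, a Gronwall/fixed-point argument on $\sup_i\|g_i(\theta_0)\|_{L_p}$ closes the bound because $\gamma_0<1$. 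Shrinking to a neighborhood $B(\theta_0)\subset\Theta$ is automatic since $\Theta$ is open.

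For part~(b): differentiating the closed form for $g_i(\theta)$ term by term, $\partial g_i/\partial\theta_j$ is again a geometrically weighted sum of the quantities $1$, $Z_{i-\ell}^2$, $Z_{i-\ell}$ (with polynomial-in-$\ell$ prefactors coming from differentiating $\gamma^{\ell-1}$, still dominated by the geometric decay since $\gamma_u<1$), multiplied by bounded derivatives of $\omega_1^g,\beta_1^g,\alpha_1^g$ — these derivatives are bounded on $\Theta$ because the maps are real-analytic and $\beta$ is bounded away from $0$ on $(\beta_l,\beta_u)$. Hence $\sup_\theta|\partial g_i/\partial\theta_j|$ is bounded by a fixed geometric series in the $\|Z_{i-\ell}\|$-type terms, and taking $L_p$ norms and using the uniform moment bounds from part~(a) gives the claim; the second and third derivatives are handled identically, as each further differentiation only introduces more bounded derivative factors and higher (but still bounded-degree) polynomial prefactors. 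The main obstacle throughout is the strict positive lower bound $g_i(\theta)\ge c>0$ uniformly in $i$ and $\theta$: the asymmetry term $\alpha_1^g Z_{i-1}$ is not sign-definite, so one must argue — e.g.\ by exploiting that within the recursion $\beta_1^g Z_{i-1}^2+\alpha_1^g Z_{i-1}+\tfrac12\omega_1^g$ is a quadratic in $Z_{i-1}$ that stays nonnegative once the discriminant condition $( \alpha_1^g)^2\le 2\beta_1^g\omega_1^g$ holds on $\Theta$ (which can be imposed through the constants $\omega_l,\beta_l,\alpha_u$ in Assumption~\ref{assumption}(a)) — that the negative excursions of $\alpha_1^g Z_{i-1}$ are absorbed, leaving $g_i(\theta)\ge\gamma^{i-1}g_1(\theta)+\tfrac12\sum_\ell\gamma^{\ell-1}\omega_1^g\ge c$.
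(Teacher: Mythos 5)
Your strategy---iterate the recursion, lower-bound $g_i(\theta)$ by a constant $c>0$, and then control everything through absolute $L_p$ moments of $Z_i$---departs from the paper at the decisive point, and the departure creates a genuine gap. Once you replace $g_i(\theta_0)/g_i(\theta)$ by $c^{-1}g_i(\theta_0)$ you have discarded the self-normalizing structure of the ratio (the same $Z_{i-\ell}$'s appear in numerator and denominator), and you are then forced to prove $\sup_i\|g_i(\theta_0)\|_{L_p}<\infty$ and $\sup_i\|Z_i\|_{L_{2p}}<\infty$ for \emph{every} $p\ge1$. The stated assumptions do not supply these moments: Assumption 1(b) gives only uniform integrability of $|D_i|$ (essentially an $L_1$-level bound), and stationarity in (d) does not ``upgrade'' integrability to higher moments, so your BDG-plus-Gronwall bootstrap cannot be closed (it would in any case also need a contraction condition of the form $\gamma+C\,\beta_1^g<1$ that is nowhere assumed). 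The paper's proof, following Lemma 2 of \cite{KW:2016}, keeps the ratio intact: each summand of the iterated numerator is compared with the matching summand of the denominator, and the elementary inequality $x/(x+1)\le x^{s}$, $s\in[0,1]$, is invoked so that only a small fractional moment, $\sup_i E\bigl(Z_i^{2ps}\bigr)<\infty$ for a suitably small $s$, is needed. The same device carries part (b): what the paper actually bounds there (and what is used in Lemma 3 and Theorem 1) is the normalized derivative $g_i(\theta)^{-1}\,\partial g_i(\theta)/\partial\theta_j$, again via $x/(x+1)\le x^{s}$; your unnormalized bound $\sup_\theta|\partial g_i/\partial\theta_j|\le C\sum_\ell\gamma^{\ell-1}\bigl(1+Z_{i-\ell}^2+|Z_{i-\ell}|\bigr)$ would again require $E|Z_i|^{2p}<\infty$ for all $p$, which is unavailable.

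On the positive side, you correctly isolate the one point the paper glosses over: because $\alpha_1^g Z_{i-1}$ is not sign-definite, uniform positivity of $g_i(\theta)$ over $\Theta$ needs a discriminant-type restriction such as $(\alpha_1^g)^2\le 4\,\omega_1^g\beta_1^g$ (your variant $(\alpha_1^g)^2\le 2\,\beta_1^g\omega_1^g$), which the paper implicitly relies on (``$g_i(\theta_0)$ stays away from zero'') but never states. However, proposing to ``impose'' it through the constants in Assumption 1(a) adds a hypothesis rather than proving the lemma under the stated ones, and it does not repair the moment gap above. To make your argument work you would need to keep every quantity in ratio/normalized form and run the $x^{s}$ trick with $s$ chosen small enough that the low moments of $Z_i$ actually guaranteed by the assumptions suffice.
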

\begin{proof}
	(a) By the iteration of $g_{i}(\theta)$, we have
	\begin{align*}
	g_{i}(\theta)=\sum_{k=0}^{i-3}(\omega_{1}^{g}+\beta_{1}^{g}Z_{i-k-1}^{2}+\alpha_{1}^{g} Z_{i-k-1})\gamma^{k}+\gamma^{i-2}g_{2}
	(\theta),
	\end{align*}
	where
	\begin{align}
	g_{2}(\theta)=\beta^{-2}(e^{\beta}-1-\beta)\omega+[\beta^{-2}(\gamma-1)(e^{\beta}-1-\beta)+\beta^{-1}(e^{\beta}-1)]\sigma_{1}^{2}< \infty.
	\end{align}
	Choose $s\in[0,1]$ such that $\sup_{i \in \mathbb{N}} E(Z_{i}^{2ps})<\infty$.  Then, similar to the proof of Lemma 2(d) of \cite{KW:2016}, it is easy to obtain the following result,
	\begin{align*}
	\sup\limits_{i\in \mathbb{N}}\left\| \sup\limits_{\theta\in
		B(\Theta_{0})}\frac{g_{i}(\theta_{0})}{g_{i}(\theta)}\right\|_{L_{p}}<\infty.
	\end{align*}
	
	(b)  We first prove that the first order derivatives are finite. We have
	\begin{align*}
	\frac{\partial g_{i}(\theta)}{\partial \alpha_{1}}= &\sum_{k=0}^{i-3}\beta_{1}^{-1}\beta_{1}^{g}\gamma^{k}Z_{i-k-1}+\gamma^{i-2}\frac{\partial g_{2}
		(\theta)}{\partial\alpha}\\
	\leq & \sum_{k=0}^{i-3}\beta_{1}^{-1}\beta_{1}^{g}\gamma^{k}Z_{i-k-1}+C.
	\end{align*}
	By noticing that $x/(x+1)\leq x^{s}$ for any $x\geq 0$ and any $s\in [0,1]$, we can show
	\begin{align*}
	g_{i}(\theta)^{-1}\frac{\partial g_{i}(\theta)}{\partial \alpha_{1}}
	\leq & \sum_{k=0}^{i-3}\frac{\beta_{1}^{-1}\beta_{1}^{g}Z_{i-k-1}}{\omega_{1}^{g}+\beta_{1}^{g}Z_{i-k-1}^{2}+\alpha_{1}^{g} Z_{i-k-1}}+C\\
	\leq & C\sum_{k=0}^{i-3} (\alpha_{1}^g Z_{i-k-2})^{s}+C.
	\end{align*}
	Under Assumption 1 (b), we have
	\begin{align*}
	\sup\limits_{i\in \mathbb{N}}\left\|\sup\limits_{\theta\in \Theta}\frac{\partial g_{i}(\theta)}{\partial \alpha_{1}}\right\|_{L_{p}}\leq C.
	\end{align*}
	Applying the same argument, we can also prove that
	\begin{align*}
	\sup\limits_{i\in \mathbb{N}}\left\|\sup\limits_{\theta\in \Theta}\frac{\partial g_{i}(\theta)}{\partial \omega_{1}}\right\|_{L_{p}}\leq C,\quad \sup\limits_{i\in \mathbb{N}}\left\|\sup\limits_{\theta\in \Theta}\frac{\partial g_{i}(\theta)}{\partial \beta_{1}}\right\|_{L_{p}}\leq C.
	\end{align*}
	and
	\begin{align*}
	\sup\limits_{i\in \mathbb{N}}\left\|\sup\limits_{\theta\in \Theta}\frac{\partial g_{i}(\theta)}{\partial\gamma}\right\|_{L_{p}}\leq C.
	\end{align*}
	
	Finally, we can similarly show the boundedness for the second order, and third order derivatives.
\end{proof}
	
\begin{lemma}
	Under Assumption (a), (b), (d), (f) and (g), we have
	\begin{align*}
	\sup_{\theta\in
		\Theta } \left|\tilde{L}_{n,m}^{GH} (\theta)-L_n^{GH}(\theta)\right|= O_p(m^{-1/4})+o_p(1).
	\end{align*}
\end{lemma}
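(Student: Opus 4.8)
The plan is to insert the two intermediate objects $\tilde{L}_{n}^{GH}(\theta)$ and $L_{n}^{GH}(\theta)$ already introduced at the start of the appendix and to telescope the difference through them. Since the $\log g_i(\theta)$ terms cancel in every pairwise difference, and since $\int_{i-1}^i\sigma_t^2\,dt=g_i(\theta_0)+D_i$ by Proposition 1(a), one has
\begin{align*}
\tilde{L}_{n,m}^{GH}(\theta)-\tilde{L}_{n}^{GH}(\theta)&=-\frac{1}{2n}\sum_{i=1}^n\frac{RV_i-\int_{i-1}^i\sigma_t^2\,dt}{g_i(\theta)},\\
\tilde{L}_{n}^{GH}(\theta)-L_{n}^{GH}(\theta)&=-\frac{1}{2n}\sum_{i=1}^n\frac{D_i}{g_i(\theta)}.
\end{align*}
So by the triangle inequality it suffices to show that the supremum over $\Theta$ of the first sum is $O_p(m^{-1/4})$ and that of the second is $o_p(1)$.

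For the first sum I would use that, by the iteration formula for $g_i(\theta)$ derived in the proof of Lemma 1 together with the constraints in Assumption 1(a), $g_i(\theta)\ge c>0$ uniformly in $i$ and $\theta\in\Theta$, so the supremum over $\theta$ is bounded by $\frac{1}{2nc}\sum_{i=1}^n|RV_i-\int_{i-1}^i\sigma_t^2\,dt|$. Taking expectations, applying Jensen's (or Hölder's) inequality to pass from $L_{1+\delta}$ to $L_1$, and invoking Assumption 1(f) bounds the expectation by $Cm^{-1/4}$; Assumption 1(g), which bounds $E[RV_i\mid\mathcal F_{i-1}]$ by a constant multiple of $g_i(\theta_0)$ plus a constant, is what keeps the individual ratios integrable uniformly in $i$ and $\theta$ (it is needed if one prefers to condition on $\mathcal F_{i-1}$ rather than rely on the crude lower bound). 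Markov's inequality then yields the $O_p(m^{-1/4})$ estimate.

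For the second sum I would first note that, for fixed $\theta$, $g_i(\theta)$ is $\mathcal F_{i-1}$-measurable (it depends only on $Z_{i-1},Z_{i-2},\dots$), so $\{D_i/g_i(\theta)\}_i$ is a martingale difference sequence; after replacing $g_i(\theta)$ by its stationary version $\bar g_i(\theta)=\sum_{k\ge0}(\omega_1^g+\beta_1^g Z_{i-k-1}^2+\alpha_1^g Z_{i-k-1})\gamma^k$ — which changes $g_i$ by a term that is geometrically small and hence negligible after averaging — Assumption 1(d) makes $\{D_i/\bar g_i(\theta)\}_i$ stationary and ergodic, and Assumption 1(b) guarantees integrability, so Birkhoff's theorem gives $\frac1n\sum_{i=1}^n D_i/g_i(\theta)\to E[D_1/\bar g_1(\theta)]=0$ pointwise in $\theta$. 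To upgrade this to uniformity over $\Theta$ I would pass to its compact closure, cover it by a finite net, and control the oscillation of $\theta\mapsto D_i/g_i(\theta)$ through its derivative $-D_i\,\partial_\theta g_i(\theta)/g_i(\theta)^2$, whose supremum over $\theta$ is integrable by Lemma 1(b), the lower bound on $g_i$, and the uniform integrability in Assumption 1(b), using a truncation in $|D_i|$ since we only have uniform integrability rather than a uniform moment bound. The hard part will be exactly this last step — establishing the uniform-in-$\theta$ law of large numbers for the martingale-difference average from uniform integrability alone — whereas the first sum is routine given Assumption 1(f); a minor point worth checking is that the new asymmetry term $\alpha_1^g Z_{i-1}$ in $g_i(\theta)$ preserves both the positivity and the moment bounds, but this is already supplied by Lemma 1.
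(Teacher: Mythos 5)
Your proposal is correct and follows essentially the same route as the paper: the paper's own proof of this lemma is just a one-line appeal to the proof of Lemma 3 in Kim and Wang (2016), whose argument is exactly the one you reconstruct — telescoping through $\tilde{L}_{n}^{GH}(\theta)$, handling the $RV_i-\int_{i-1}^{i}\sigma_t^2\,dt$ term via Assumption 1(f), H\"older/Markov and the uniform control of $1/g_i(\theta)$, and treating the $D_i/g_i(\theta)$ term as a stationary ergodic martingale-difference average that vanishes uniformly over the compact parameter set. Your write-up in fact supplies more detail (the positivity of $g_i(\theta)$ under the asymmetry term and the uniform ergodic LLN step) than the paper itself records, so no discrepancy with the paper's argument arises.
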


\begin{proof}
	The differences of integrated volatilities between the GQARCH-It\^{o} model and the GARCH-It\^{o} model of \cite{KW:2016} is the martingale difference term. Furthermore, the asymmetry information only acts the variance of parameters, have no effect on the convergence rate of parameters. Therefore, similar to the proof of Lemma 3 of \cite{KW:2016}, we can obtain the result.
\end{proof}

\begin{lemma}
	Under Assumption (a), (b) and (h), we have
	\\
	\\
	(a). there exists a neighborhood $B(\theta_{0})$ of $\theta$ such that $\sup\limits_{i\in \mathbb{N}}\left\|\sup_{\theta\in
		B(\theta_{0})}\frac{\partial^{3}\tilde{l}_{i}^{GH}(\theta)}{\partial\theta_{j}\partial\theta_{k}\partial\theta_{v}}\right\|_{L_{1}}<\infty$ for any $j,k,v\in
	\left\{1,2,3,4\right\}$, where $\theta=(\theta_{1},\theta_{2}, \theta_{3}, \theta_{4})=(\omega,\beta,\gamma,\alpha)$.
	\\
	\\
	(b). $-\nabla \psi_{n}^{GH}(\theta_{0})$ is a positive definite matrix for $n\geq4$.
\end{lemma}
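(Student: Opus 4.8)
The plan is to treat the two parts separately: part (a) reduces to the estimates already collected in Lemma 1, while part (b) combines an exact Hessian computation with a non-degeneracy argument. For part (a), I would differentiate $\tilde{l}_i^{GH}(\theta)=\log g_i(\theta)+RV_i/g_i(\theta)$ three times; writing $\partial_j=\partial/\partial\theta_j$, the chain and quotient rules show that $\partial_j\partial_k\partial_v\tilde{l}_i^{GH}(\theta)$ is a finite linear combination of terms of the two shapes
\begin{align*}
g_i(\theta)^{-a}\prod_{s=1}^{r}\partial^{(b_s)}g_i(\theta)\qquad\text{and}\qquad RV_i\,g_i(\theta)^{-a}\prod_{s=1}^{r}\partial^{(b_s)}g_i(\theta),
\end{align*}
where $\partial^{(b_s)}g_i(\theta)$ denotes some partial derivative of $g_i$ of order $b_s\in\{1,2,3\}$, $\sum_s b_s=3$, $r\le3$, and $1\le a\le3$ in the first shape while $2\le a\le4$ in the second. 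On the neighborhood $B(\theta_0)$ of Lemma 1(a) I would use the bound $g_i(\theta)^{-1}\le C\,g_i(\theta_0)/g_i(\theta)$ (the lower bound $g_i(\theta_0)\ge c_0>0$ being available on $\Theta$) together with the factorization $RV_i\,g_i(\theta)^{-a}=\bigl(RV_i/g_i(\theta_0)\bigr)\,g_i(\theta_0)^{1-a}\,\bigl(g_i(\theta_0)/g_i(\theta)\bigr)^{a}$ with $a\ge2$ for the second shape; every resulting factor is bounded in all $L_p$ uniformly in $i$ --- the factors $\sup_{B(\theta_0)}g_i(\theta_0)/g_i(\theta)$ by Lemma 1(a), the factors $\sup_{\Theta}|\partial^{(b_s)}g_i|$ by Lemma 1(b), and $RV_i/g_i(\theta_0)$ by Assumption 1(h) --- so Hölder's inequality with conjugate exponents, applied after pushing the supremum over $\theta\in B(\theta_0)$ inside the norms, gives $\sup_i\bigl\|\sup_{\theta\in B(\theta_0)}|\partial_j\partial_k\partial_v\tilde{l}_i^{GH}(\theta)|\bigr\|_{L_1}<\infty$. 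This part is essentially bookkeeping; the only genuine choice is which moment bound on $RV_i/g_i(\theta_0)$ Assumption 1(h) supplies.

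For part (b), the first step is exact. Setting $\ell_i(\theta)=\log g_i(\theta)+g_i(\theta_0)/g_i(\theta)$ one has $\nabla\ell_i(\theta)=\bigl(1-g_i(\theta_0)/g_i(\theta)\bigr)\nabla g_i(\theta)/g_i(\theta)$, so that $\nabla\ell_i(\theta_0)=0$, and a second differentiation shows that the two terms carrying the Hessian $\nabla^2 g_i$ cancel at $\theta=\theta_0$, leaving
\begin{align*}
-\nabla\psi_n^{GH}(\theta_0)=\frac{1}{2n}\sum_{i=1}^{n}\frac{\nabla g_i(\theta_0)\,\nabla g_i(\theta_0)^{T}}{g_i^{2}(\theta_0)},
\end{align*}
which is positive semidefinite. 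By the iteration for $g_i$ in the proof of Lemma 1, each $g_i(\theta_0)$ and each $\nabla g_i(\theta_0)$ is a fixed polynomial in the innovations $(Z_1,\dots,Z_{n-1})$ (with $\sigma_1^2$ itself a quadratic in $Z_1$), and in the GQARCH-It\^{o} model this vector is absolutely continuous; hence $\det\bigl(-\nabla\psi_n^{GH}(\theta_0)\bigr)$ is a fixed polynomial in $(Z_1,\dots,Z_{n-1})$ that is a.s. nonzero exactly when it does not vanish identically. Writing $-\nabla\psi_n^{GH}(\theta_0)=\tfrac{1}{2n}GDG^{T}$ with $G=[\nabla g_1(\theta_0)\mid\dots\mid\nabla g_n(\theta_0)]$ and $D=\mathrm{diag}\bigl(g_1^{-2}(\theta_0),\dots,g_n^{-2}(\theta_0)\bigr)$ and expanding the determinant by Cauchy--Binet, the matter reduces, for $n\ge4$, to showing that the minor $\det[\nabla g_1(\theta_0)\mid\nabla g_2(\theta_0)\mid\nabla g_3(\theta_0)\mid\nabla g_4(\theta_0)]$ is not the zero polynomial.

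To establish that, I would differentiate the recursion $g_i(\theta)=\omega_1^g+\gamma g_{i-1}(\theta)+\beta_1^g Z_{i-1}^2+\alpha_1^g Z_{i-1}$ at $\theta_0$, obtaining
\begin{align*}
\nabla g_i(\theta_0)=\nabla\omega_1^g+g_{i-1}(\theta_0)\,e_\gamma+\gamma_0\,\nabla g_{i-1}(\theta_0)+Z_{i-1}^{2}\,\nabla\beta_1^g+Z_{i-1}\,\nabla\alpha_1^g ,
\end{align*}
where $e_\gamma$ is the coordinate vector of the $\gamma$-direction and $\nabla\omega_1^g,\nabla\beta_1^g,\nabla\alpha_1^g$ are evaluated at $\theta_0$. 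Thus $Z_3$ enters $\nabla g_4(\theta_0)$ only through the fixed vectors $\nabla\beta_1^g$ (as coefficient of $Z_3^2$) and $\nabla\alpha_1^g$ (as coefficient of $Z_3$), so, read as a polynomial in $Z_3$, the minor has top coefficient $\det[\nabla g_1(\theta_0)\mid\nabla g_2(\theta_0)\mid\nabla g_3(\theta_0)\mid\nabla\beta_1^g]$; since $\nabla g_1(\theta_0),\nabla g_2(\theta_0),\nabla\beta_1^g$ all lie in the hyperplane $\{u_\alpha=0\}$ (the closed forms of $g_1,g_2,\beta_1^g$ carry no $\alpha$) while the $\alpha$-component of $\nabla g_3(\theta_0)$ is $(\partial\alpha_1^g/\partial\alpha)|_{\theta_0}\,Z_2$ with $\partial\alpha_1^g/\partial\alpha$ nonvanishing on $\Theta$, expansion along the $\alpha$-row turns this top coefficient into $\pm(\partial\alpha_1^g/\partial\alpha)|_{\theta_0}\,Z_2$ times the $3\times3$ determinant of $\nabla g_1(\theta_0),\nabla g_2(\theta_0),\nabla\beta_1^g$ in $(\omega,\beta,\gamma)$-coordinates, which is affine in $\sigma_1^2$ and does not vanish identically as soon as $\nabla g_1(\theta_0)$, the coefficient of $\sigma_1^2$ in $\nabla g_2(\theta_0)$, and the $(\omega,\beta,\gamma)$-part of $\nabla\beta_1^g$ are linearly independent --- a condition read off directly from the closed forms of $\omega_1^g,\beta_1^g,\alpha_1^g,g_1,g_2$ and valid on $\Theta$. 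Then the minor is a nonzero polynomial, so $-\nabla\psi_n^{GH}(\theta_0)$ is a.s. positive definite for $n\ge4$, the threshold $n\ge4$ being precisely what makes all three innovations $Z_1,Z_2,Z_3$ available. The main obstacle, in my view, is exactly this non-degeneracy step: verifying, from the explicit reparametrization formulas, the few generic conditions on $\theta_0$ that prevent the minor from vanishing, and confirming that $(Z_1,\dots,Z_{n-1})$ is absolutely continuous in the It\^{o} model; the Hessian cancellation in part (b) and the whole of part (a) are routine by comparison.
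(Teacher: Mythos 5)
Your overall plan coincides with the paper's proof: part (a) is the same third--derivative expansion controlled by Lemma 1 plus H\"older, and part (b) rests on the same identity $-\nabla\psi_{n}^{GH}(\theta_{0})=\frac{1}{2n}\sum_{i=1}^{n}g_{i}^{-2}(\theta_{0})\,\partial_{\theta}g_{i}(\theta_{0})\,\partial_{\theta}g_{i}(\theta_{0})^{T}$ followed by an almost-sure non-degeneracy argument. One step in your part (a), however, would fail as written: you assert that $RV_{i}/g_{i}(\theta_{0})$ is ``bounded in all $L_{p}$'' by the assumption on $RV_i$. That assumption (the paper's (g)/(h)) only states $E[RV_{i}\,|\,\mathcal{F}_{i-1}]\leq C\,E[\int_{i-1}^{i}\sigma_{t}^{2}dt\,|\,\mathcal{F}_{i-1}]+C$ a.s., i.e.\ a conditional first-moment bound, so you cannot use $RV_{i}/g_{i}(\theta_{0})$ as a H\"older factor in some $L_{p}$ with $p>1$. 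The paper's fix is to condition first: since $g_{i}(\theta)$, its $\theta$-derivatives and $g_{i}(\theta_{0})$ are all $\mathcal{F}_{i-1}$-measurable and $E[\int_{i-1}^{i}\sigma_{t}^{2}dt\,|\,\mathcal{F}_{i-1}]=g_{i}(\theta_{0})$ by Proposition 1(b), the tower property turns every term containing $RV_{i}$ into one containing $C\,g_{i}(\theta_{0})+C$, and only then is H\"older applied with the two factors $\sup_{B(\theta_{0})}g_{i}(\theta_{0})/g_{i}(\theta)$ (Lemma 1(a)) and the normalized derivative terms (Lemma 1(b)). With this reordering your bookkeeping for (a) goes through.

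For part (b) your argument is a more explicit version of what the paper merely asserts. The paper supposes $\lambda\neq 0$ with $\partial_{\theta}g_{i}(\theta_{0})^{T}\lambda/g_{i}(\theta_{0})=0$ for all $i$, writes the $n\times 4$ matrix of gradients via the recursions for $\partial g_{i+1}/\partial\omega,\partial\beta,\partial\gamma,\partial\alpha$, and simply states that nondegeneracy of the $Z_{i}$'s makes it full rank a.s.; your Cauchy--Binet reduction, extraction of the $Z_{3}^{2}$-coefficient, and expansion along the $\alpha$-row (using that $g_{1},g_{2},\beta_{1}^{g}$ are free of $\alpha$ while $\partial\alpha_{1}^{g}/\partial\alpha\neq 0$) is a legitimate way to substantiate that claim. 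Be aware, though, that your route needs strictly more than the paper invokes: the ``nonzero polynomial $\Rightarrow$ a.s.\ nonzero'' step requires that $(Z_{1},\dots,Z_{n-1})$ charges no proper algebraic subset (e.g.\ joint absolute continuity, which you flag but do not prove), and the final $3\times 3$ linear-independence condition on the $(\omega,\beta,\gamma)$-parts of $\nabla g_{1}(\theta_{0})$, the $\sigma_{1}^{2}$-coefficient of $\nabla g_{2}(\theta_{0})$ and $\nabla\beta_{1}^{g}$ is left unverified and also depends on how $g_{1}$ is initialized, which the paper never specifies. So part (b) is the same idea carried further, with its hardest verification still open --- a gap your proposal shares with, rather than inherits from, the paper's own one-line justification.
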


\begin{proof}
	(a) For any $j,k,v\in \left\{1,2,3,4\right\}$, we can obtain
	\begin{align*}
	\frac{\partial^{3}\tilde{l}_{i}^{GH}(\theta)}{\partial\theta_{j}\partial\theta_{k}\partial\theta_{v}}=&
	\left\{1-\frac{RV_{i}}{g_{i}(\theta)}\right\}\left\{\frac{1}{g_{i}(\theta)}\frac{\partial^{3}g_{i}(\theta)}{\partial\theta_{j}\partial\theta_{k}\partial\theta_{v}}\right\}\\
	&+ \left\{2\frac{RV_{i}}{g_{i}(\theta)}-1\right\}\left\{\frac{1}{g_{i}(\theta)}\frac{\partial
		g_{i}(\theta)}{\partial\theta_{j}}\right\}\left\{\frac{1}{g_{i}(\theta)}\frac{\partial^{2} g_{i}(\theta)}{\partial\theta_{k}\partial\theta_{v}}\right\}\\
	&+ \left\{2\frac{RV_{i}}{g_{i}(\theta)}-1\right\}\left\{\frac{1}{g_{i}(\theta)}\frac{\partial g_{i}(\theta)}{\partial
		\theta_{k}}\right\}\left\{\frac{1}{g_{i}(\theta)}\frac{\partial^{2} g_{i}(\theta)}{\partial\theta_{j}\partial\theta_{v}}\right\}\\
	&+\left\{2\frac{RV_{i}}{g_{i}(\theta)}-1\right\}\left\{\frac{1}{g_{i}(\theta)}\frac{\partial
		g_{i}(\theta)}{\partial\theta_{v}}\right\}\left\{\frac{1}{g_{i}(\theta)}\frac{\partial^{2} g_{i}(\theta)}{\partial\theta_{j}\partial\theta_{k}}\right\}\\
	&+\left\{2-6\frac{RV_{i}}{g_{i}(\theta)}\right\}\left\{\frac{1}{g_{i}(\theta)}\frac{\partial g_{i}(\theta)}{\partial\theta_{j}}\right\}
	\left\{\frac{1}{g_{i}(\theta)}\frac{\partial g_{i}(\theta)}{\partial\theta_{k}}\right\}
	\left\{\frac{1}{g_{i}(\theta)}\frac{\partial g_{i}(\theta)}{\partial\theta_{v}}\right\}
	\end{align*}
	By Assumption 1 (h), we can get
	\begin{align*}
	E[RV_{i}|\mathcal{F}_{i-1}]\leq CE\left[\int_{i-1}^{i}\sigma_{t}^{2}dt|\mathcal{F}_{i-1}\right]+C \quad a.s..
	\end{align*}
	Then, by Lemma 1, the tower property and H\"{o}lder's inequality, we have
	\begin{align*}
	&E\left[\sup\limits_{\theta\in
		B(\theta_{0})}\left|\frac{RV_{i}}{g_{i}(\theta)}\left\{\frac{1}{g_{i}(\theta)}\frac{\partial^{3}g_{i}(\theta)}{\partial\theta_{j}\partial\theta_{k}\partial\theta_{v}}\right\}\right|\right]\\
	\leq & C E\left[\sup\limits_{\theta\in
		B(\theta_{0})}\frac{g_{i}(\theta_{0})}{g_{i}(\theta)}\left|\frac{1}{g_{i}(\theta)}\frac{\partial^{3}g_{i}(\theta)}{\partial\theta_{j}\partial\theta_{k}\partial\theta_{v}}\right|\right]+C\\
	\leq &C \left\|\sup\limits_{\theta\in B(\theta_{0})}\frac{g_{i}(\theta_{0})}{g_{i}(\theta)}\right\|_{L_p} \left\|\sup\limits_{\theta\in
		B(\theta_{0})}\left|\frac{1}{g_{i}(\theta)}\frac{\partial^{3}g_{i}(\theta)}{\partial\theta_{j}\partial\theta_{k}\partial\theta_{v}}\right|\right\|_{L_{q}}
	+C\leq C<\infty
	\end{align*}
	where $1/p+1/q=1,p>1$ and $q>1$. Similarly, we can prove that other terms are also bounded.
	
	(b) It is easy to show that
	\begin{align*}
	-\nabla \psi_{n}^{GH}(\theta_{0})=\frac{1}{2n}\sum_{i=1}^{n}\frac{\partial g_{i}(\theta_{0})}{\partial\theta}\frac{\partial
		g_{i}(\theta_{0})^{T}}{\partial\theta}g_{i}(\theta_{0})^{-2}=\frac{1}{2n}\sum_{i=1}^{n}g_{\theta,i}g_{\theta,i}^{T}
	\end{align*}
	where $g_{\theta,i}=\frac{\partial g_{i}(\theta_{0})}{\partial\theta}g_{i}(\theta_{0})^{-1}$. Suppose that $-\nabla \psi_{n}^{GH}(\theta_{0})$ is not a positive definite
	matrix. Then, there exists $\lambda\neq \mathbf{0}$ such that $\frac{1}{2n}\sum_{i=1}^{n}\lambda ^{T}g_{\theta,i}g_{\theta,i}^{T}\lambda=0$, which further implies
	\begin{align*}
	g_{\theta,i}^{T}\lambda =0 \quad a.s. \quad \mbox{ for all} \quad i=1,\ldots, n.
	\end{align*}
	Since $g_{i}(\theta_{0})$ stays away from zero, we have
	\begin{align*}
	\begin{pmatrix}
	\frac{\partial g_{1}(\theta_{0})}{\partial\omega} & \frac{\partial g_{1}(\theta_{0})}{\partial\beta} & \frac{\partial g_{1}(\theta_{0})}{\partial\gamma} &\frac{\partial
		g_{1}(\theta_{0})}{\partial\alpha}\\
	\frac{\partial g_{2}(\theta_{0})}{\partial\omega} & \frac{\partial g_{2}(\theta_{0})}{\partial\beta}&
	\frac{\partial g_{2}(\theta_{0})}{\partial\gamma}&
	\frac{\partial g_{2}(\theta_{0})}{\partial\alpha}\\
	\vdots & \vdots & \vdots & \vdots\\
	\frac{\partial g_{n}(\theta_{0})}{\partial\omega} & \frac{\partial g_{n}(\theta_{0})}{\partial\beta} & \frac{\partial g_{n}(\theta_{0})}{\partial\gamma}&\frac{\partial g_{n}(\theta_{0})}{\partial\alpha}\\
	\end{pmatrix}
	\lambda=\mathbf{0} \quad a.s.,
	\end{align*}
	where
	\begin{align*}
	&\frac{\partial g_{i+1}(\theta_{0})}{\partial\omega}=\frac{\partial\omega_{0}^{g}}{\partial\omega} +\gamma\frac{\partial g_{i}(\theta_{0})}{\partial\omega},\\
	&\frac{\partial g_{i+1}(\theta_{0})}{\partial\beta}= \frac{\partial\omega_{0}^{g}}{\partial\beta}+\gamma\frac{\partial
		g_{i}(\theta_{0})}{\partial\beta}+\frac{\partial\beta_{0}^{g}}{\partial\beta}Z_{i}^{2}+\frac{\partial\alpha_{0}^{g}}{\partial\beta}Z_{i},\\
	&\frac{\partial g_{i+1}(\theta_{0})}{\partial\gamma}=g_{i}(\theta_{0})+\gamma\frac{\partial
		g_{i}(\theta_{0})}{\partial\gamma}+\frac{\partial\beta_{0}^{g}}{\partial\gamma}Z_{i}^{2}+\frac{\partial\alpha_{0}^{g}}{\partial\gamma}Z_{i},\\
	& \frac{\partial g_{i+1}(\theta_{0})}{\partial\alpha}= \gamma\frac{\partial
		g_{i}(\theta_{0})}{\partial\alpha}+\frac{\partial\alpha_{0}^{g}}{\partial\alpha}Z_{i},
	\end{align*}
	and  $\frac{\partial\omega_{0}^{g}}{\partial\omega} =\beta_{0}^{-1}(e^{\beta_{0}}-1)$, $\frac{\partial\omega_{0}^{g}}{\partial\beta}
	=\beta_{0}^{-2}(1-e^{\beta_{0}})\omega_{0}+\beta_{0}^{-1}e^{\beta_{0}}\omega_{0}$,
	$\frac{\partial\beta_{0}^{g}}{\partial\beta}=(\gamma_{0}-1)
	(\beta_{0}^{-1}e^{\beta_{0}}-\beta_{0}^{-2}e^{\beta_{0}}+\beta_{0}^{-2})+e^{\beta_{0}}$, $\frac{\partial\alpha_{0}^{g}}{\partial\beta}=
	(\beta_{0}^{-2}\gamma_{0}(e^{\beta_{0}}-1)-2\beta_{0}^{-3}(\gamma_{0}-1)(e^{\beta_{0}}-\beta_{0}-1)-2\beta_{0}^{-2}(e^{\beta_{0}}-1)+\beta_{0}^{-1}e^{\beta_{0}})\alpha$,
	$\frac{\partial\beta_{0}^{g}}{\partial\gamma}=\beta_{0}^{-1}\alpha_{0}(e^{\beta_{0}}-1-\beta_{0})$, $\frac{\partial\alpha_{0}^{g}}{\partial\gamma}=\alpha_{0}\beta_0^{-2}(e^{\beta_0}-1-\beta_0)$, $\frac{\partial\alpha_{0}^{g}}{\partial\alpha}=\beta_0^{-2}(\gamma_{0}-1)(e^{\beta_0}-1-\beta_0)+\beta_{0}^{-1}(e^{\beta_{0}}-1)$. Since $Z_{i}$'s and $O_i$'s are nondegenerate, the
	matrix on the left hand side is of full rank a.s., which implies $\lambda=\mathbf{0}$ a.s. Thus, it is a contradiction to the initial assumption.
\end{proof}

\textbf{Proof of Theorem 1 }

(a) According to the definition of
$L_{n}^{GH}(\theta)$, we have
\begin{align*}
\max \limits_{\theta\in\Theta}~L_{n}^{GH}(\theta) \leq
-\frac{1}{2n}\sum_{i=1}^{n}\min\limits_{\theta_{i}\in\Theta}~\log(g_{i}(\theta_{i}))+\frac{g_{i}(\theta_{0})}{g_{i}(\theta_{i})}.
\end{align*}
If $\theta_{0i}$ satisfies $g_{i}(\theta_{0i})=g_{i}(\theta_{0})$, $\theta_{0i}$ is the minimizer of $\log(g_{i}(\theta_{i}))+\frac{g_{i}(\theta_{0})}{g_{i}(\theta_{i})}$.Thus, if $\theta^{*}\in\Theta$ satisfies $g_{i}(\theta^{*})=g_{i}(\theta_{0})$ for all $i=1,2,\ldots,n$, $\theta^{*}$ is the maximizer of $L_{n}^{GH}(\theta)$. Next, we show that $\theta^{*}$ must be equal $\theta_{0}$ a.s. Since
\begin{align*}
g_{i}(\theta)=\omega_{1}^{g}+\gamma g_{i-1}(\theta)+\beta_{1}^{g}Z_{i-1}^{2}+\alpha_{1}^{g}Z_{i-1},
\end{align*}
both $\theta^{*}$ and $\theta_{0}$ satisfy the following equation,
\begin{align*}
\begin{pmatrix}
1 & g_{1}(\theta_{0}) & Z_{1}^{2} & Z_{1}\\
1 & g_{2}(\theta_{0}) & Z_{2}^{2} & Z_{2}\\
\vdots & \vdots & \vdots & \vdots \\
1 & g_{n-1}(\theta_{0}) & Z_{n-1}^{2} & Z_{n-1}\\
\end{pmatrix}
\begin{pmatrix}
\omega^{*g}-\omega_{0}^{g}\\
\gamma^{*}-\gamma_{0}\\
\beta^{*g}-\beta_{0}^{g}\\
\alpha^{*g}-\alpha_{0}^{g}\\
\end{pmatrix}
\equiv M
\begin{pmatrix}
\omega^{*g}-\omega_{0}^{g}\\
\gamma^{*}-\gamma_{0}\\
\beta^{*g}-\beta_{0}^{g}\\
\alpha^{*g}-\alpha_{0}^{g}\\
\end{pmatrix}
=\mathbf{0} \quad \mbox {a.s.},
\end{align*}
where $\omega^{*g}=\beta^{*-1}(e^{\beta^{*}}-1)\omega^{*}$, $\beta^{*
	g}=\beta^{*-1}(\gamma^{*}-1)(e^{\beta^{*}}-1-\beta^{*})+e^{\beta^{*}}-1$,
and $\alpha^{*g}=\alpha^{*}(\beta^{*-2}(\gamma-1)(e^{\beta^{*}}-1-\beta^{*})+\beta^{*-1}(e^{\beta^{*}}-1))$. Since $Z_{i}$'s  are nondegenerate,
$M$ is of full rank, which implies that $M^{T}M$ is invertible and
\begin{align*}
\begin{pmatrix}
\omega^{*g}-\omega_{0}^{g}\\
\gamma^{*}-\gamma_{0}\\
\beta^{*g}-\beta_{0}^{g}\\
\alpha^{*g}-\alpha_{0}^{g}\\
\end{pmatrix}=\mathbf{0} \quad \mbox{a.s.}
\end{align*}
For given $\gamma$, $\beta^{g}$ is strictly increasing function with respect to $\beta$ and for given $\beta$, $\alpha^g$ is strictly increasing function with respect to $\alpha$ and $\beta$. Then, we have $\theta^{*}=\theta_{0}$, i.e., there is a unique maximizer of $L_{n}^{GH}(\theta)$. Then, since $L_{n}^{GH}(\theta)$ is a continuous function, for any $\varepsilon>0$, there is a constant $c$, such that
\begin{align*}
L_{n}^{GH}(\theta_{0})-\max \limits_{\theta\in\Theta: \left\| \theta-\theta_{0}\right\|_{max}\geq \varepsilon}L_{n}^{GH}(\theta)> c \quad \mbox{a.s.}
\end{align*}
With the help of Theorem 1 in \cite{X:2010} and Lemma 2, we can derive the conclusion.

(b) Applying Taylor expansion and Rolle mean value theorem, we have
\begin{align*}
\tilde{\psi}_{n,m}^{GH}(\tilde{\theta}^{GH})-\tilde{\psi}_{n,m}^{GH}(\theta_{0})=-\tilde{\psi}_{n,m}^{GH}(\theta_{0})=\nabla\tilde{\psi}_{n,m}^{GH}(\theta^{*})(\tilde{\theta}^{GH}-\theta_{0})
\end{align*}
where $ \theta^{*}$ is between $\theta_{0}$ and $\tilde{\theta}^{GH}$. According to Lemma 3 (b), $-\nabla \psi_{n}^{GH}(\theta_{0})$ is a positive matrix. If we can show $-\nabla
\tilde{\psi}_{n,m}^{GH}(\theta^{*}) \xrightarrow{p} -\nabla \psi_{n}^{GH}(\theta_{0})$, the convergence rate of $\tilde{\theta}^{GH} - \theta_{0}$ is the same as that of
$\tilde{\psi}_{n,m}^{GH}(\theta_{0})$.

We first show that
\begin{align*}
\tilde{\psi}_{n,m}^{GH}(\theta_{0})=O_{p}(m^{-1/4})+O_{p}(n^{-1/2}).
\end{align*}
For any $j\in \{1,2,3,4\}$, by Lemma 1 and H\"{o}lder's inequality, we have
\begin{align}
\nonumber \left\|
\tilde{\psi}_{n,m}^{GH}(\theta_{0})-\tilde{\psi}_{n}^{GH}(\theta_{0})
\right\|_{L_1}=& \left\|
\frac{1}{2n}\sum_{i=1}^{n}\frac{\partial g_{i}(\theta_{0})}{\partial\theta_{j}}g_{i}(\theta_{0})^{-2}\left(RV_{i}-\int_{i-1}^{i}\sigma_{s}^{2}dt\right)
\right\|_{L_{1}}\\
\nonumber \leq &C\frac{1}{n}\sum_{i-1}^{n}\left\|
\frac{\partial g_{i}(\theta_{0})}{\partial\theta_{j}}g_{i}(\theta_{0})^{-1}\right\|_{L_{q}}\left\|
RV_{i}-\int_{i=1}^{i}\sigma_{s}^{2}dt\right\|_{L_{p}}\\
\label{eq:order_3}\leq & C m^{-1/4}
\end{align}
where $1<p\leq1+\delta$ and $1/p+1/q=1$ and the last inequality is due to Assumption 1 (g).
Then, we have
\begin{align*}
\tilde{\psi}_{n,m}^{GH}(\theta_{0})=\frac{1}{2n}\sum_{i=1}^{n}\frac{\partial g_{i}(\theta_{0})}{\partial\theta}g_{i}(\theta_{0})^{-1}\frac{D_{i}}{g_{i}(\theta_{0})}+O_{p}(m^{-1/4})
\end{align*}
Applying It\^{o}'s lemma and It\^{o} isometry, we have for any $j\in \{1,2,3,4\}$,
\begin{align}
\nonumber &E\left[\left(\frac{1}{2n}\sum_{i=1}^{n}\frac{\partial
	g_{i}(\theta_{0})}{\partial\theta_{j}}g_{i}(\theta_{0})^{-1}\frac{D_{i}}{g_{i}(\theta_{0})}\right)^{2}\right]\\
\nonumber =&\frac{1}{4n^{2}}\sum_{i=1}^{n}E\left[\left(\frac{\partial
	g_{i}(\theta_{0})}{\partial\theta_{j}}\right)^{2}g_{i}(\theta_{0})^{-2}\frac{D_{i}^{2}}{g_{i}^{2}(\theta_{0})}\right]\\
\nonumber =& \frac{1}{4n^{2}}\sum_{i=1}^{n}E\left[\left(\frac{\partial
	g_{i}(\theta_{0})}{\partial\theta_{j}}\right)^{2}g_{i}(\theta_{0})^{-2}\frac{E[D_{i}^{2}|\mathcal{F}_{i-1}]}{g_{i}^{2}(\theta_{0})}\right]\\
\label{eq:order_1}\leq & C \frac{1}{n^{2}}\sum_{i=1}^{n}E\left[\left(\frac{\partial
	g_{i}(\theta_{0})}{\partial\theta_{j}}\right)^{2}g_{i}(\theta_{0})^{-2}\frac{E[Z_{i}^{4}|\mathcal{F}_{i-1}]}{g_{i}^{2}(\theta_{0})}\right].
\end{align}
According to Assumption (c) and Lemma 2 (b), we know that (\ref{eq:order_1}) is of order $n^{-1}$. Thus, we further have
\begin{align*}
\tilde{\psi}_{n,m}^{GH}(\theta_{0})=O_{p}(m^{-1/4})+O_{p}(n^{-1/2}).
\end{align*}

Then, we show that
\begin{align*}
\left\|\nabla
\tilde{\psi}_{n,m}^{GH}(\theta^{*})-\nabla\psi_{n}^{GH}(\theta_{0})\right\|_{\max}=o_p(1).\end{align*}
By the triangular inequality, we have
\begin{align}
\left\|
\nabla
\tilde{\psi}_{n,m}^{GH}(\theta^{*})-\nabla\psi_{n}^{GH}(\theta_{0})\right\|_{\max}\leq & \left\|
\nabla
\tilde{\psi}_{n,m}^{GH}(\theta^{*})-\nabla\tilde{\psi}_{n,m}^{GH}(\theta_{0})\right\|_{\max}+\left\|
\nabla
\tilde{\psi}_{n,m}^{GH}(\theta_{0})-\nabla\psi_{n}^{GH}(\theta_{0})\right\|_{\max}. \label{eq:order_2}
\end{align}
For the first term on the right side of (\ref{eq:order_2}), noticing Theorem 1 (a) and Lemma 3 (a), we have
\begin{align*}
\left\|
\nabla
\tilde{\psi}_{n,m}^{GH}(\theta^{*})-\nabla\tilde{\psi}_{n,m}^{GH}(\theta_{0})\right\|_{\max}\leq &\frac{C}{n}\sum_{i=1}^{n} \max_{j,k,v\in
	{(1,2,3,4)}^{3}}\sup_{\theta\in B(\theta_{0})}\left|\frac{\partial^{3}
	\tilde{l}_{i}^{GH}(\theta)}{\partial\theta_{j}\partial\theta_{k}\partial\theta_{v}}\right|\left\|\theta^{*}-\theta_{0}\right\|_{\max} = o_{p}(1).
\end{align*}
For the second term on the right side of (\ref{eq:order_2}), similar to the proof of (\ref{eq:order_3}), by H\"{o}lder's inequality and Lemma 1, we have
\begin{align*}
\left\|
\nabla
\tilde{\psi}_{n,m}^{GH}(\theta_{0})-\nabla\tilde{\psi}_{n}^{GH}(\theta_{0})\right\|_{\max}
=& O_{p}(m^{-1/4})
\end{align*}
Therefore, we derive that
\begin{align*}
\nabla
\tilde{\psi}_{n,m}^{GH}(\theta_{0})=\nabla\tilde{\psi}_{n}^{GH}(\theta_{0})
+ O_{p}(m^{1/4})=\nabla\psi_{n}^{GH}(\theta_{0})+\zeta_{n}+ O_{p}(m^{-1/4}) \quad \mbox {a.s.},
\end{align*}
where $\zeta_{n}=\frac{1}{2n}\sum_{i=1}^{n}\left(\frac{\partial ^{2}
	g_{i}(\theta_{0})}{\partial\theta\partial\theta^{T}}g_{i}(\theta_{0})^{-1}\frac{-D_{i}}{g_{i}(\theta_{0})}+\frac{\partial
	g_{i}(\theta_{0})}{\partial\theta}\left(\frac{\partial g_{i}(\theta_{0})}{\partial\theta}\right)^{T}g_{i}(\theta_{0})^{-2}\frac{2D_{i}}{g_{i}(\theta_{0})}\right)$ is a martingale. Similar to the proof of (\ref{eq:order_1}), we can show $\left\|\zeta_{n}\right\|_{\max}=O_{p}(n^{-1/2})$.
Thus, we obtain
\begin{align*}
\left\|\nabla
\tilde{\psi}_{n,m}^{GH}(\theta^{*})-\nabla\psi_{n}^{GH}(\theta_{0})\right\|_{\max}=o_p(1).\end{align*}

Finally, with the help of the above two results, we have $\left\|\tilde{\theta}^{GH}-\theta_{0}\right\|_{\max} = O_{p}\left(m^{-1/4}+n^{-1/2}\right).$

\subsection*{A3. Proof of Theorem 2}

\begin{proof}
	As
	\begin{align*}
	\tilde{\psi}_{n,m}^{GH}(\tilde{\theta}^{GH})-\tilde{\psi}_{n,m}^{GH}(\theta_{0})=-\tilde{\psi}_{n,m}^{GH}(\theta_{0})=\nabla\tilde{\psi}_{n,m}^{GH}(\theta^{*})(\tilde{\theta}^{GH}-\theta_{0}),
	\end{align*}
	where $ \theta^{*}$ is between $\theta_{0}$ and $\tilde{\theta}^{GH}$, we have
	\begin{align*}
	\sqrt{n}(\tilde{\theta}^{GH}-\theta_{0})=&-\sqrt{n}\left(\nabla\tilde{\psi}_{n,m}^{GH}(\theta^{*})\right)^{-1}\tilde{\psi}_{n,m}^{GH}(\theta_{0})\\
	=&-\sqrt{n}\left(\nabla\tilde{\psi}_{n,m}^{GH}(\theta^{*})\right)^{-1}\left(\tilde{\psi}_{n}^{GH}(\theta_{0})+O_{p}(m^{-1/4})\right)\\
	=&-\sqrt{n}\left(\nabla\psi_{n}^{GH}(\theta_{0})+o_{p}(1)\right)^{-1}\tilde{\psi}_{n}^{GH}(\theta_{0})+o_{p}(1),
	\end{align*}
	where the second and third equality is due to the proof of Theorem 1(b).
	
	As $\lambda^T \frac{\partial
		g_{i}(\theta_{0})}{\partial\theta}g_{i}(\theta_{0})^{-1}\frac{D_{i}}{g_{i}(\theta_{0})}$ is stationary and ergodic, we have
	\begin{align*}
	\sqrt{n}\tilde{\psi}_{n}^{GH}(\theta_{0})=\sqrt{n}\frac{1}{2n}\sum_{i=1}^{n}\frac{\partial
		g_{i}(\theta_{0})}{\partial\theta}g_{i}(\theta_{0})^{-1}\frac{D_{i}}{g_{i}(\theta_{0})}\xrightarrow {d} N(0,A^{GH}),
	\end{align*}
	by using Cram\'{e}r-Wold device and the martingale central limit theorem.
	
	On the other hand, by the proof of Theorem 1(b),
	\begin{align*}
	-\nabla\psi_{n}^{GH}(\theta_{0})=&\frac{1}{2n}\sum_{i=1}^{n}\left(\frac{1}{g_{i}^{2}(\theta_{0})}\frac{\partial g_{i}(\theta_{0})}{\partial\theta}\left(\frac{\partial g_{i}(\theta_{0})}{\partial\theta}\right)^{T}\right)\rightarrow  B \quad \mbox{in probability}.
	\end{align*}
	Therefore,
	\begin{align*}
	\sqrt{n}(\tilde{\theta}^{GH}-\theta_{0})\xrightarrow {d}& N(0,B^{-1}A^{GH}B^{-1}).
	\end{align*}
\end{proof}

\end{document}